\documentclass{elsarticle}

\usepackage[T1]{fontenc}
\usepackage[latin1]{inputenc}
\usepackage[english]{babel}
\usepackage{amsthm}
\usepackage{amsmath}
\usepackage{amssymb}
\usepackage{amscd}
\usepackage{vmargin}
\usepackage[colorlinks=true,citecolor=black,linkcolor=black,urlcolor=blue]{hyperref}
\usepackage{cleveref}
\usepackage{subfig}
\usepackage{nicefrac}
\usepackage{xspace}
\setmarginsrb{3cm}{2cm}{3cm}{2cm}{1cm}{1cm}{1cm}{1cm} 
\newtheorem{theorem}{Theorem}
\crefname{theorem}{Theorem}{Theorems}
\newtheorem{corollary}[theorem]{Corollary}
\crefname{corollary}{Corollary}{Corollaries}

\newtheorem{proposition}[theorem]{Proposition}
\crefname{proposition}{Proposition}{Propositions}

\crefname{claim}{claim}{claims}
\Crefname{claim}{Claim}{Claims}
\newtheorem{remark}[theorem]{Remark}
\crefname{remark}{Remark}{Remarks}

\newcommand{\vt}[1]{\overrightarrow{#1}}

\newcommand{\ie}{\textit{i.e.}\xspace}

\def\paren#1{\left( #1 \right)}
\def\acc#1{\left\{ #1 \right\}}

\DeclareMathOperator{\at}{at}

\renewcommand{\le}{\leqslant}
\renewcommand{\ge}{\geqslant}
\newcommand{\pnk}{$P_{n,k}$\xspace}

\begin{document}
\begin{frontmatter}
  \title{Oriented coloring of graphs with low maximum degree\tnoteref{t1}}
  \tnotetext[t1]{The work was partially supported by the French ANR grant HOSIGRA
    under contract ANR-17-CE40-0022.} 
  \author{Pascal Ochem} 
  \author{Alexandre Pinlou}
  \address{LIRMM, Université de Montpellier, CNRS, Montpellier,
    France}
  
  \begin{abstract} 
    Duffy et al. [C. Duffy, G. MacGillivray, and É. Sopena, \textit{Oriented
    colourings of graphs with maximum degree three and four},
    Discrete Mathematics, 342(4), p. 959--974, 2019] recently
    considered the oriented chromatic number of \emph{connected}
    oriented graphs with maximum degree~$3$ and~$4$, proving it is at
    most $9$ and $69$, respectively. In this paper, we improve these
    results by showing that the oriented chromatic number of
    \emph{non-necessarily connected} oriented graphs with maximum
    degree $3$ (resp.~$4$) is at most $9$ (resp. $26$). The bound of
    $26$ actually follows from a general result which determines properties of a target
    graph to be universal for graphs of bounded maximum degree.
    This generalization also allows us to get 
    the upper bound of $90$ (resp. $306$, $1322$) for
    the oriented chromatic number of graphs with maximum degree $5$
    (resp. $6$, $7$).
  \end{abstract}

  \begin{keyword}
    Oriented graph, Homomorphism, Coloring, Bounded degree graph.
  \end{keyword}

\end{frontmatter}

\section{Introduction}

Oriented graphs are directed graphs with neither loops nor opposite arcs.
Unless otherwise specified, the term \emph{graph} refers to \emph{oriented graph} in the sequel.

For a graph~$G$, we denote by $V(G)$ its set of vertices and
by $A(G)$ its set of arcs. For two adjacent vertices $u$ and $v$, we
denote by $\vt{uv}$ the arc from $u$ to $v$, or simply $uv$ whenever
its orientation is not relevant (therefore, $uv=\vt{uv}$ or
$uv=\vt{vu}$).

Given two graphs $G$ and $H$, a \emph{homomorphism} from $G$
to $H$ is a mapping $\varphi:V(G)\rightarrow V(H)$ that preserves the
arcs, that is, $\vt{\varphi(x)\varphi(y)}\in A(H)$ whenever $\vt{xy}\in A(G)$.

An \emph{oriented $k$-coloring} of $G$ can be defined as a homomorphism from
$G$ to $H$, where $H$ is a graph with $k$ vertices. The existence of such a
homomorphism from $G$ to $H$ is denoted by $G\rightarrow H$. The
vertices of $H$ are called \emph{colors}, and we say that $G$ is
$H$-colorable. The \emph{oriented chromatic number} of a
graph $G$, denoted by $\chi_o(G)$, is defined as the smallest number
of vertices of a graph $H$ such that $G\rightarrow H$.

The notion of oriented coloring introduced by
Courcelle~\cite{courcelle_monadic_1994} has been studied by several
authors and the problem of bounding the oriented chromatic number has
been investigated for various graph classes: outerplanar graphs (with
given minimum girth)~\cite{pinlou_oriented_2006-3,
  sopena_chromatic_1997}, 2-outerplanar
graphs~\cite{esperet_oriented_2007, ochem_oriented_2014}, planar
graphs (with given minimum girth)~\cite{borodin_oriented_2005,
  borodin_oriented_2005-1, borodin_oriented_2007,
  borodin_maximum_1999, ochem_oriented_2004, ochem_oriented_2014,
  pinlou_oriented_2009, raspaud_good_1994}, graphs with bounded
maximum average degree~\cite{borodin_oriented_2007,
  borodin_maximum_1999}, graphs with bounded
degree~\cite{duffy_oriented_2019, kostochka_acyclic_1997,
  sopena_note_1996}, graphs with bounded
treewidth~\cite{ochem_oriented_2008, sopena_chromatic_1997,
  sopena_oriented_2001}, graph
subdivisions~\cite{wood_acyclic_2005}, \dots A survey on the
study of oriented colorings has been done by Sopena in 2001 and
recently updated~\cite{sopena_homomorphisms_2016}. 

For bounded degree graphs, Kostochka et
al.~\cite{kostochka_acyclic_1997} proved as a general bound that graphs with maximum
degree~$\Delta$ have oriented chromatic number at most
$2\Delta^22^\Delta$. They also showed that, for every $\Delta$, there exists 
graphs with maximum degree $\Delta$ and oriented chromatic number at
least~$2^{\nicefrac{\Delta}{2}}$. For low maximum degrees, specific
results
are only known for graphs with maximum degree~$3$ and~$4$.
Sopena~\cite{sopena_chromatic_1997} proved that graphs with
maximum degree~$3$ have an oriented chromatic number at most~$16$ and
conjectured that any such connected graphs have an  oriented 
chromatic number at most~$7$. The upper bound was later improved by Sopena and
Vignal~\cite{sopena_note_1996} to~$11$. Recently, Duffy et
al.~\cite{duffy_oriented_2019} proved that~$9$ colors are enough for
\emph{connected} graphs with maximum degree~$3$. They proved
in the same paper that \emph{connected} graphs with maximum degree~$4$ have
oriented chromatic number at most~$69$. Lower bounds
are given by Duffy et al.~\cite{duffy_oriented_2019} who 
exhibit a graph with maximum degree~$3$ (resp.~$4$) and oriented
chromatic number~$7$ (resp.~$11$). Note that the above-mentioned
conjecture of Sopena is thus best possible.

In this paper, we improve the upper bounds of graph with maximum
degree~$3$ and~$4$. We first prove in \Cref{sec:deg3} that the
oriented chromatic number of graphs with maximum degree~$3$ is at
most~$9$, that is, we remove the condition of connectivity;
see~\Cref{thm:deg3}. For graphs with maximum degree $4$, we in fact
propose in~\Cref{sec:deg4} a general result which determines
properties of a target graph to be universal for (non-necessarily
connected) graphs of maximum degree $\Delta\ge 4$;
see~\Cref{thm:deg5-8}. As a consequence of this general result, we get
that the oriented chromatic number of graph with maximum degree~$4$ is
at most~$26$, substantially decreasing the bound of $69$ due to Duffy
et al.~\cite{duffy_oriented_2019}. We also get that the oriented
chromatic number of graphs with maximum degree $5$ (resp. $6$, $7$) is
at most $90$ (resp. $306$, $1322$). The next two sections will be devoted to define
the notation and to present the properties of the target graphs we use
to prove our results.

\section{Notation}

In the remainder of this paper, we use the following notions. For a
vertex $v$ of a graph $G$, we denote by $N^+_G(v)$ the set of outgoing
neighbors of $v$, by $N^-_G(v)$ the set of incoming neighbors of $v$
and by $N_G(v) = N^+_G(v) \cup N^-_G(v)$ the set of neighbors of $v$
(subscripts are omitted when the considered graph is clearly
identified from the context). The \emph{degree} of a vertex $v$ is the
number of its neighbors $|N(v)|$. If two graphs $G$ and $H$ are isomorphic, we denote
this by~$G\cong H$.

\section{Paley tournaments and Tromp digraphs} 

In this section, we describe the general construction of graphs
that will be used to prove~\Cref{thm:deg3,thm:deg5-8}, and present some
of their useful properties.

For a prime power $p\equiv 3\pmod 4$, the \emph{Paley tournament}
$QR_p$ is defined as the graph whose vertices are the elements of the
field $\mathbb{F}_p$ and such that $\vt{uv}$ is an arc if and only if
$v - u$ is a non-zero quadratic residue of $\mathbb{F}_p$.  Clearly
$QR_p$ is vertex- and arc-transitive. 

An \emph{orientation $n$-vector} is a sequence
$\alpha = (\alpha_1,\alpha_2,\ldots,\alpha_n)\in \{-1,1\}^n$ of $n$
elements. Let $S=(v_1,v_2,\ldots,v_n)$ be a sequence of $n$ (not
necessarily distinct) vertices of a graph $G$. The vertex $u$ is said
to be an \emph{$\alpha$-successor of~$S$} if for any $i$,
$1\le i\le n$, we have $\vt{uv_i}\in A(G)$ whenever $\alpha_i = 1$ and
$\vt{v_iu}\in A(G)$ otherwise. A sequence $S=(v_1,v_2,\ldots,v_n)$ of
$n$ (not necessarily distinct) vertices of $QR_p$ is said to be
\emph{compatible} with an orientation $n$-vector
$\alpha=(\alpha_1,\alpha_2,$ $\ldots,\alpha_n)$ if and only if
$\alpha_i=\alpha_j$ whenever $v_i = v_j$ since graphs do not contain
opposite arcs. We say that a graph $G$ has \emph{Property \pnk} 
if, for every sequence $S$ of $n$ vertices of $G$ and any compatible
orientation $n$-vector~$\alpha$, there exist $k$ distinct $\alpha$-successors
of $S$. Such Properties \pnk have been extensively used in many papers
dealing with oriented coloring. 

\begin{proposition}\label{prop:qr11}
  The Paley tournament $QR_p$ has Properties $P_{1,\frac{p-1}2}$ and $P_{2,\frac{p-3}4}$.
\end{proposition}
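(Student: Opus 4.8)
The plan is to exploit the vertex- and arc-transitivity of $QR_p$ together with standard character-sum (or elementary counting) estimates for quadratic residues.

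For Property $P_{1,\frac{p-1}{2}}$: a sequence $S$ of length $1$ is a single vertex $v_1$, and the compatible orientation vectors are just $\alpha=(1)$ or $\alpha=(-1)$. An $\alpha$-successor with $\alpha=(1)$ is a vertex $u$ with $\vt{uv_1}\in A(QR_p)$, i.e.\ an in-neighbor of $v_1$; with $\alpha=(-1)$ it is an out-neighbor of $v_1$. Since for each nonzero field element exactly one of it and its negative is a quadratic residue (because $p\equiv 3\pmod 4$ means $-1$ is a non-residue), every vertex has in-degree and out-degree exactly $\frac{p-1}{2}$. So there are exactly $\frac{p-1}{2}$ $\alpha$-successors in either case, which gives the claim.

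For Property $P_{2,\frac{p-3}{4}}$: here $S=(v_1,v_2)$ with $v_1\neq v_2$ (if $v_1=v_2$ the pair reduces to the $P_1$ case and trivially has at least $\frac{p-3}{4}$ successors), and $\alpha=(\alpha_1,\alpha_2)$ prescribes the arc directions between $u$ and each of $v_1,v_2$. By vertex-transitivity we may translate so that $v_1=0$; by arc-transitivity of $QR_p$ acting on ordered arcs, and by the symmetry $x\mapsto cx$ for $c$ a quadratic residue (which is an automorphism) together with possibly reversing via $x\mapsto -x$ composed with arc-reversal, we may reduce the four sign patterns to essentially one case, say counting common out-neighbors of $0$ and $v_2=:a$. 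Writing $\chi$ for the quadratic-residue character of $\mathbb{F}_p$ extended by $\chi(0)=0$, the number of such $u$ is
\[
  \sum_{u\in\mathbb{F}_p}\frac{1+\chi(u)}{2}\cdot\frac{1+\chi(u-a)}{2}
  =\frac{1}{4}\Bigl(p-2+\sum_{u}\chi(u)\chi(u-a)+\text{(two linear terms)}\Bigr).
\]
The standard evaluation $\sum_{u}\chi(u)\chi(u-a)=-1$ for $a\neq 0$, and $\sum_u\chi(u)=0$, together with a careful bookkeeping of the $O(1)$ corrections coming from the terms $u=0$ and $u=a$ (where the character vanishes), yields that the count equals $\frac{p-3}{4}$ exactly, or at least that value, for each admissible sign pattern.

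The main obstacle is the bookkeeping in the second part: the naive character sum counts, e.g., common out-neighbors of $0$ and $a$ \emph{including possibly $0$ or $a$ themselves}, but $0$ and $a$ are adjacent to each other and cannot be their own successors, so one must subtract off these boundary contributions and check that the four orientation vectors $(\pm1,\pm1)$ — which after the transitivity reduction split into two genuinely different cases depending on whether $v_1v_2$ and the prescribed arcs at $u$ are "consistent" — all land on the same value $\frac{p-3}{4}$. Handling the exact constant (rather than an asymptotic bound) is what requires care; I expect the cleanest route is to cite the classical count of the number of vertices dominating/dominated-by a prescribed pair in a Paley tournament, which is known to be exactly $\frac{p-3}{4}$ in each of the relevant cases, as has been used in earlier oriented-coloring papers.
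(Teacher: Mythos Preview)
Your approach is correct and the character-sum calculation does yield the claimed bounds, but it differs from the paper's proof of $P_{2,\frac{p-3}{4}}$. The paper argues purely combinatorially: it counts the total number of transitive triangles in $QR_p$ (there are $\binom{(p-1)/2}{2}$ with a given source vertex, hence $\frac{p(p-1)(p-3)}{8}$ in all), and then uses arc-transitivity to conclude that each arc plays the role of $\vt{xy}$, $\vt{yz}$, or $\vt{xz}$ in exactly $\frac{p-3}{4}$ transitive triangles. This immediately gives $|N^+(0)\cap N^+(1)|=|N^-(0)\cap N^-(1)|=|N^+(0)\cap N^-(1)|=\frac{p-3}{4}$, and the remaining case $|N^-(0)\cap N^+(1)|$ is then obtained by complementation in $V(QR_p)\setminus\{0,1\}$ and equals $\frac{p+1}{4}$.

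Your character-sum route is equally valid and perhaps more mechanical once the Jacobi-type identity $\sum_u\chi(u)\chi(u-a)=-1$ is in hand; the paper's triangle-counting argument is more elementary (no characters needed) and lets arc-transitivity do all the work, at the cost of being specific to the $n=2$ case. One small correction to your write-up: the four orientation patterns do \emph{not} all give exactly $\frac{p-3}{4}$; one of them gives $\frac{p+1}{4}$, so your hedge ``or at least that value'' is essential, and your claim that the four cases reduce to ``essentially one'' via automorphisms is slightly off---arc-transitivity together with the arc-reversing map $x\mapsto -x$ reduces the four cases to two genuinely distinct ones.
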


\begin{proof}
  By the vertex-transitivity of $QR_p$, the in-degree of every vertex
  is equal to its out-degree.  This implies that $QR_p$ has Property
  $P_{1,\frac{p-1}2}$.

  Let us prove that $QR_p$ has Property $P_{2,\frac{p-3}4}$. To do so,
  by arc-transitivity of $QR_p$, we just have to show that there exist
  at least $\frac{p-3}4$ $\alpha$-successors of the sequence $S=(0,1)$
  for any of the four orientation vector $\alpha \in \{-1,1\}^2$.

  We first need to count the transitive triangles with arcs $\vt{xy}$,
  $\vt{yz}$, and $\vt{xz}$ in $QR_p$.  There are $p$ choices for the
  source vertex $x$ of a transitive triangle.  The number of
  transitive triangles such that $x=0$ is equal to the number of arcs
  in $N^+(0)$, that is,
  $\binom{\nicefrac{(p-1)}2}{2}=\frac{(p-1)(p-3)}8$.  Thus, $QR_p$
  contains $\frac{p(p-1)(p-3)}8$ transitive triangles.

  Considering $\alpha=(+1,+1)$, we can notice that
  $|N^+(0)\cap N^+(1)|$ is the number of transitive triangles such
  that $\vt{xy}=\vt{01}$.  Since $QR_p$ is arc-transitive and contains
  $\frac{p(p-1)}2$ arcs,
  $|N^+(0)\cap N^+(1)|=\frac{p(p-1)(p-3)/8}{p(p-1)/2} = \frac{p-3}4$.
  Similarly for $\alpha=(-1,-1)$ and $\alpha=(+1,-1)$, considering
  $\vt{yz}=\vt{01}$ gives $|N^-(0)\cap N^-(1)| = \frac{p-3}4$ and
  considering $\vt{xz}=\vt{01}$ gives
  $|N^+(0)\cap N^-(1)| = \frac{p-3}4$.  Finally for $\alpha=(-1,+1)$,
  we have
  $N^-(0)\cap N^+(1) = V\paren{QR_p}\setminus\{0,1,N^+(0)\cap
  N^+(1),N^-(0)\cap N^-(1), N^+(0)\cap N^-(1)\}$, so that
  $|N^-(0)\cap N^+(1)| = p-2-\frac{3(p-3)}4 = \frac{p+1}4 > \frac{p-3}4$.  This
  proves $P_{2,\frac{p-3}4}$.
\end{proof}

Paley tournaments will be used as basic brick to build new graphs as
explained below. Tromp (unpublished manuscript) proposed the following
construction.  Let $G$ be a graph and let $G'\cong G$.  The Tromp
graph $Tr(G)$ has $2|V(G)|+2$ vertices and is defined as follows:

\begin{itemize}
\item $V(Tr(G)) = V(G) \cup V(G') \cup \{\infty,\infty'\}$ 
\item $\forall u\in V(G) : \vt{u\infty}, \vt{\infty u'},
  \vt{u'\infty'}, \vt{\infty' u}\in A(Tr(G))$ 
\item $\forall u,v\in V(G), \vt{uv}\in A(G) : \vt{uv}, \vt{u'v'},
  \vt{vu'}, \vt{v'u} \in A(Tr(G))$
\end{itemize}

\begin{figure}
  \begin{center}
    \includegraphics[scale=2]{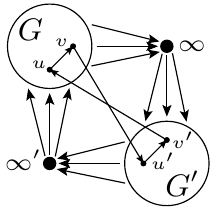}
  \end{center}
  \caption{The Tromp graph $Tr(G)$.\label{fig:tromp}}
\end{figure}

\Cref{fig:tromp} illustrates the construction of $Tr(G)$. We can
observe that, for every $u\in V(G) \cup \{\infty\}$, there is no arc
between $u$ and $u'$. Such pairs of vertices will be called
\emph{anti-twin vertices}, and we denote by $\at(u)=u'$ the anti-twin
vertex of $u$.

In the following, we apply Tromp's construction to Paley tournaments
$QR_p$ which produces graphs with interesting structural
properties. First of all, Marshall~\cite{marshall_homomorphism_2007}
proved that any $Tr(QR_p)$ is \emph{vertex-transitive} and
\emph{arc-transitive}. He also prove that any $Tr(QR_p)$ is
\emph{triangle-transitive}, meaning that, given two triangles
$u_1u_2u_3$ and $v_1v_2v_3$ of $Tr(QR_p)$ with the same orientation,
there exists an automorphism that maps $u_i$ to $v_i$.
Secondly, it is possible to derive Properties \pnk for
$Tr(QR_p)$ knowing those of $QR_p$ (see~\Cref{prop:pnk}). Let us first
define the notion of compatible sequence of vertices of $Tr(QR_p)$
with an orientation vector: A sequence $S=(v_1,v_2,\ldots,v_n)$ of $n$
(not necessarily distinct) vertices of $Tr(QR_p)$ is said to be
\emph{compatible} with an orientation $n$-vector
$\alpha = (\alpha_1,\alpha_2,\ldots,\alpha_n)$ if and only if for any
$i\ne j$, we have $\alpha_i\ne\alpha_j$ whenever $v_i = \at(v_j)$, and
$\alpha_i=\alpha_j$ whenever $v_i = v_j$. If the $n$ vertices of $S$
induce an $n$-clique subgraph of $Tr(QR_p)$ (\ie $v_1,v_2,\ldots,v_n$
are pairwise distinct and induce a complete graph), then $S$ is
compatible with any orientation $n$-vector since a vertex $u$ and its
anti-twin $\at(u)$ cannot belong together to the same clique. The
authors already studied properties of $Tr(QR_{19})$
(see~\cite[Proposition~5]{ochem_oriented_2014}) and their results can
be easily generalized to $Tr(QR_p)$:

\begin{proposition}\cite{ochem_oriented_2014}\label{prop:pnk} 
  If $QR_p$ has Property $P_{n-1,k}$, then $Tr(QR_p)$ has Property \pnk.
\end{proposition}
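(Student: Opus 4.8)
The plan is to repackage $Tr(QR_p)$ so that lifting $\alpha$-successors from $QR_p$ becomes transparent, and then to reduce Property \pnk for $Tr(QR_p)$ to Property $P_{n-1,k}$ for $QR_p$. First I would set up a \emph{signed model} of $Tr(QR_p)$. Let $W=V(QR_p)\cup\{\infty\}$ and let $T$ be the tournament on $W$ obtained from $QR_p$ by adding $\infty$ as a sink, \ie\ $\vt{u\infty}\in A(T)$ for every $u\in V(QR_p)$. I would identify $V(Tr(QR_p))$ with $W\times\{+,-\}$, writing $a^{+}$ for $a$ (and $\infty^{+}$ for $\infty$) and $a^{-}$ for $\at(a)=a'$ (and $\infty^{-}$ for $\infty'$), and then check straight from the definition of the Tromp construction that, for $a^{\varepsilon}\ne b^{\delta}$,
\[
  \vt{a^{\varepsilon}b^{\delta}}\in A(Tr(QR_p))
  \iff
  \big(\varepsilon=\delta\ \text{and}\ \vt{ab}\in A(T)\big)
  \ \text{or}\
  \big(\varepsilon\ne\delta\ \text{and}\ \vt{ba}\in A(T)\big),
\]
the anti-twin pairs $\{a^{+},a^{-}\}$ being precisely the non-adjacent pairs.

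Next I would lift successors. Given $S=(v_1,\dots,v_n)$ compatible with $\alpha$, write $v_i=a_i^{\varepsilon_i}$, and for $\eta\in\{+,-\}$ define the sign vector $\beta^{\eta}$ by $\beta^{\eta}_i=\alpha_i$ if $\varepsilon_i=\eta$ and $\beta^{\eta}_i=-\alpha_i$ otherwise, so that $\beta^{-}=-\beta^{+}$. By the arc rule above, $c^{\eta}$ is an $\alpha$-successor of $S$ in $Tr(QR_p)$ if and only if $c\in W$ is a $\beta^{\eta}$-successor of $A:=(a_1,\dots,a_n)$ in $T$; moreover, since $\at$ gives the unique non-adjacency of $Tr(QR_p)$, compatibility of $(S,\alpha)$ forces $\beta^{\eta}_i=\beta^{\eta}_j$ whenever $a_i=a_j$, so $(A,\beta^{\eta})$ is compatible in $T$. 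As $c^{+}$ and $c^{-}$ are always distinct vertices, it therefore suffices to exhibit some $\eta\in\{+,-\}$ for which $A$ has at least $k$ $\beta^{\eta}$-successors in $T$.

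Then comes the reduction via vertex-transitivity. Since $Tr(QR_p)$ is vertex-transitive and its automorphisms preserve anti-twins (hence compatibility, hence $\alpha$-successors), I may assume $v_1=\infty$, \ie\ $a_1=\infty$ and $\varepsilon_1=+$; write $\beta=\beta^{+}$. If $\beta_1=+1$, a vertex $c\in W$ is a $\beta$-successor of $A$ in $T$ iff $\vt{c\infty}\in A(T)$, that is $c\in V(QR_p)$ (this is the constraint coming from $a_1$, and by compatibility the same as the one from every other index $i$ with $a_i=\infty$), and $c$ is a $\beta'$-successor in $QR_p$ of the subsequence $A'$ of those $a_i$ lying in $V(QR_p)$, with $\beta'$ the induced sign vector. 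Now $A'$ has at most $n-1$ terms and $(A',\beta')$ is compatible; after deleting repeats and, if needed, padding with a repeated vertex up to length exactly $n-1$, Property $P_{n-1,k}$ of $QR_p$ yields at least $k$ such $c$, all in $V(QR_p)$ and hence all $\beta$-successors of $A$ in $T$. (If $A'$ is empty, every vertex of $QR_p$ qualifies, and $|V(QR_p)|\ge k$ since $QR_p$ has Property $P_{n-1,k}$.) If instead $\beta_1=-1$, then $A$ has no $\beta$-successor at all, since the constraint from $a_1=\infty$ would require an out-neighbour of the sink; but $(-\beta)_1=+1$, so the case just treated, applied to $-\beta=\beta^{-}$, gives at least $k$ $\beta^{-}$-successors of $A$ in $T$. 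In either case the criterion of the previous paragraph is met, and \pnk follows for $Tr(QR_p)$.

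The only genuinely delicate point is the verification of the signed arc rule: it is routine but must be carried out case by case, with some care around the four arc-types incident to $\infty$ and $\infty'$; everything after that is formal. Conceptually, the sink $\infty$ of $T$ is exactly what absorbs the gap between $n$ vertices in $Tr(QR_p)$ and $n-1$ vertices in $QR_p$.
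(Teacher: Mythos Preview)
The paper does not actually supply a proof of this proposition: it is stated with a citation to~\cite{ochem_oriented_2014} (where the case $Tr(QR_{19})$ is treated) and the only comment is that the argument ``can be easily generalized to $Tr(QR_p)$''. There is therefore no in-paper proof to compare your attempt against.

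That said, your proof is correct and self-contained. The signed description of $Tr(QR_p)$ via the auxiliary tournament $T$ on $V(QR_p)\cup\{\infty\}$ with $\infty$ as a sink is exactly the right device: it makes the arc rule uniform (your displayed equivalence is easily verified from the Tromp construction, including the four arc-types at $\infty$ and $\infty'$), and it turns the search for an $\alpha$-successor of $S$ in $Tr(QR_p)$ into the search for a $\beta^\eta$-successor of the projected sequence $A$ in $T$. The reduction via vertex-transitivity to $v_1=\infty$ then cleanly eats one coordinate: the sink condition forces $c\in V(QR_p)$, the remaining constraints live in $QR_p$ on a compatible sequence of length at most $n-1$, and Property $P_{n-1,k}$ finishes. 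Your handling of the two sign cases ($\beta_1=+1$ versus $\beta_1=-1$, switching to $\eta=-$) and of the degenerate padding/empty-$A'$ situations is accurate; note also that the $k$ successors you produce all share the same sign $\eta$, so they are genuinely distinct vertices of $Tr(QR_p)$.

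In short: the paper defers the proof to an earlier reference, and your argument is a valid proof in the same spirit as the standard one---the ``signed model'' repackaging is perhaps slightly slicker bookkeeping than a bare case analysis, but the underlying mechanism (use transitivity to place one vertex at $\infty$, then drop to $QR_p$) is the same.
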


Let us now introduce another type of properties.  We will say that a
graph $G$ has \emph{Property $C_{n,k}$} if, given $n$ vertices
$v_1,v_2,\dots,v_n$ of $G$ that form a clique subgraph, we have
$|\bigcup_{1\le i\le n} N^+(v_i)| \ge k$ and
$|\bigcup_{1\le i\le n} N^-(v_i)| \ge k$.

\begin{remark}\label{rem:cnk}
  Given two integers $n$ and $k$, a graph having Property $C_{n,k}$
  has Property $C_{n',k'}$ for any $n'$ and $k'$ such that $n'\ge n$
  and $k'\le k$.
\end{remark}

\begin{proposition}\label{lem:17}
  The graph $Tr(QR_p)$ has Properties $C_{2,\frac{3p+1}2}$ and
  $C_{3,\frac{7p+3}4}$.
\end{proposition}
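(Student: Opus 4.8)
The plan is to turn the statement into a bounded computation using the transitivity of $Tr(QR_p)$, and then to count inside the three layers $V(QR_p)$, its copy $V(QR_p')$, and $\{\infty,\infty'\}$ of the vertex set. Since $Tr(QR_p)$ is arc-transitive, for Property $C_{2,\frac{3p+1}{2}}$ it is enough to check the two required inequalities for one arc; and since it is triangle-transitive and each $3$-clique, having no digon, induces either a transitive triangle or a directed triangle, with (by triangle-transitivity) each of these two types forming a single automorphism orbit, it is enough for $C_{3,\frac{7p+3}{4}}$ to check them for one transitive and one directed triangle. As $1$ is a non-zero quadratic residue, $\vt{01}\in A(QR_p)\subseteq A(Tr(QR_p))$, so I would use the arc $\vt{01}$, the transitive triangle on $\{0,1,\infty\}$ (arcs $\vt{01},\vt{1\infty},\vt{0\infty}$) and the directed triangle on $\{0,\infty,1'\}$ (arcs $\vt{0\infty},\vt{\infty 1'},\vt{1'0}$); a one-line check against the definition of $Tr(\cdot)$ confirms these are $3$-cliques with the stated orientations.

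Next I would record the neighbourhoods in $Tr(QR_p)$. Writing $N^{+},N^{-}$ for neighbourhoods inside $QR_p$ and $X'=\{x':x\in X\}$ for $X\subseteq V(QR_p)$, the definition of $Tr(\cdot)$ gives at once, for $u\in V(QR_p)$,
\[
N^{+}_{Tr}(u)=N^{+}(u)\cup(N^{-}(u))'\cup\{\infty\},\qquad N^{-}_{Tr}(u)=N^{-}(u)\cup(N^{+}(u))'\cup\{\infty'\},
\]
\[
N^{+}_{Tr}(u')=(N^{+}(u))'\cup N^{-}(u)\cup\{\infty'\},\qquad N^{-}_{Tr}(u')=(N^{-}(u))'\cup N^{+}(u)\cup\{\infty\},
\]
together with $N^{+}_{Tr}(\infty)=V(QR_p')$, $N^{-}_{Tr}(\infty)=V(QR_p)$, and the mirrored identities for $\infty'$; recall $|V(QR_p)|=|V(QR_p')|=p$ and that every vertex of $QR_p$ has in- and out-degree $\frac{p-1}{2}$. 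The one non-trivial input is the size of the pairwise intersections of the $QR_p$-neighbourhoods of the two ends of an arc: the computations in the proof of \Cref{prop:qr11} for $S=(0,1)$ give $|N^{+}(0)\cap N^{+}(1)|=|N^{-}(0)\cap N^{-}(1)|=|N^{+}(0)\cap N^{-}(1)|=\frac{p-3}{4}$, so each of $N^{+}(0)\cup N^{+}(1)$, $N^{-}(0)\cup N^{-}(1)$ and $N^{+}(0)\cup N^{-}(1)$ has size $(p-1)-\frac{p-3}{4}=\frac{3p-1}{4}$.

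With this the counts are short, the three layers being pairwise disjoint. For the arc $\vt{01}$, the union $N^{+}_{Tr}(0)\cup N^{+}_{Tr}(1)$ meets $V(QR_p)$ in $N^{+}(0)\cup N^{+}(1)$ (size $\frac{3p-1}{4}$), meets $V(QR_p')$ in $(N^{-}(0))'\cup(N^{-}(1))'$ (size $\frac{3p-1}{4}$), and contains $\infty$ but not $\infty'$, for a total of $\frac{3p+1}{2}$, the in-neighbourhood union being symmetric. For the transitive triangle $\{0,1,\infty\}$, the factor $N^{+}_{Tr}(\infty)=V(QR_p')$ absorbs the primed parts of $N^{+}_{Tr}(0)$ and $N^{+}_{Tr}(1)$, so the out-union is $(N^{+}(0)\cup N^{+}(1))\cup V(QR_p')\cup\{\infty\}$ of size $\frac{3p-1}{4}+p+1=\frac{7p+3}{4}$, and the in-union $V(QR_p)\cup(N^{+}(0)\cup N^{+}(1))'\cup\{\infty'\}$ has the same size. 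For the directed triangle $\{0,\infty,1'\}$ the same bookkeeping yields $(N^{+}(0)\cup N^{-}(1))\cup V(QR_p')\cup\{\infty,\infty'\}$ of size $\frac{3p-1}{4}+p+2=\frac{7p+7}{4}>\frac{7p+3}{4}$, and likewise for in-neighbourhoods; taking the minimum over the two triangle types gives $C_{3,\frac{7p+3}{4}}$.

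The only delicate point is the bookkeeping itself: one must keep the three layers genuinely disjoint, assign each piece of every $N^{\pm}_{Tr}$ to the correct $QR_p$-neighbourhood (out versus in, primed versus unprimed), and use that $N^{\pm}_{Tr}(\infty)$ and $N^{\pm}_{Tr}(\infty')$ each lie entirely within a single layer. That last fact is precisely what makes $\infty$ (resp.\ $\infty'$) absorb the primed (resp.\ unprimed) contributions of its two clique-mates, and hence what makes the transitive triangle — not the directed one — the extremal $3$-clique; it also explains why these constants are the best obtainable from this construction.
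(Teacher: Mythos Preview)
Your argument is correct and follows essentially the same route as the paper: reduce by arc- and triangle-transitivity to a single representative of each clique type, then count the union of neighbourhoods using the intersection sizes obtained in \Cref{prop:qr11}. The differences are purely organisational: the paper chooses the arc $(0,\infty)$ (so that $N^{+}(\infty)$ swallows one whole layer already at the $C_{2}$ step) and uses the anti-twin observation $z\in\bigcup N^{+}(v_i)\Leftrightarrow\at(z)\in\bigcup N^{-}(v_i)$ to avoid computing the in-unions separately, whereas you take the arc $(0,1)$ and a clean three-layer decomposition, computing both in- and out-unions explicitly; your bookkeeping also makes visible that the directed triangle gives the strictly larger value $\frac{7p+7}{4}$, which the paper leaves implicit.
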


\begin{proof}
  Recall that $Tr(QR_p)$ is built from two copies of $QR_p$, that will
  be named $QR_p$ and $QR'_p$ in the following 
  (see~\Cref{fig:tromp}). In this proof, the
  in- and out-neighborhood of a vertex $v$ of $Tr(QR_p)$ will be
  denoted by $N^-(v)$ and $N^+(v)$, while the
  in- and out-neighborhood of a vertex $v$ in a subgraph $H$ of
  $Tr(QR_p)$ will be denoted by $N_H^-(v)$ and $N_H^+(v)$.
  
  Note that, given a set of $n$ vertices $v_1,v_2,\dots,v_n$ of
  $Tr(QR_p)$ that form a clique subgraph, if
  $z\in \bigcup_{1\le i\le n}N^+(v_i)$ then
  $\at(z) = z'\in \bigcup_{1\le i\le n} N^-(v_i)$. Thus
  $|\bigcup_{1\le i\le n} N^+(v_i)| = |\bigcup_{1\le i\le n}
  N^-(v_i)|$.

  \begin{itemize}
  \item Let us first consider Property $C_{2,\frac{3p+1}2}$. We have
    to prove that given two adjacent vertices $x$ and $y$ of
    $Tr(QR_p)$, we have
    $|N^+(x)\cup N^+(y)|\ge \frac{3p+1}2$.
    
    Since $x$ and $y$ are adjacent, w.l.o.g. $x=0$ and $y=\infty$ by
    arc-transitivity of $Tr(QR_p)$. Then $N^+(0)\cup N^+(\infty)$
    contains:
    \begin{itemize}
    \item $N^+(\infty)=\acc{0',1',\dots,(p-1)'}$ ($p$ vertices);
    \item Note that
      $N^+(0) = N_{QR_p}^+(0) \uplus N_{QR'_p}^+(0) \uplus
      \acc{\infty}$. Since $N_{QR'_p}^+(0) \subset N^+(\infty)$ is
      already counted in the previous point, we just consider
      $N_{QR_p}^+(0)$ (at least $\frac{p-1}{2}$ vertices by
      \Cref{prop:qr11}) and $\infty$ (1 vertex);
    \end{itemize}
    So that $N^+(\infty)\cup N^+(0)$ contains at least
    $p+\frac{p-1}2+1 = \frac{3p+1}2$
    vertices and thus $Tr(QR_p)$ has Property $C_{2,\frac{3p+1}2}$.

  \item Let us now consider Property $C_{3,\frac{7p+3}4}$. We have
    to prove that given three vertices $x$, $y$, and $z$ of
    $Tr(QR_p)$ that form a triangle, we have
    $|N^+(x)\cup N^+(y) \cup N^+(z)|\ge \frac{7p+3}4$.

    We have to consider two cases depending on whether $x,y,z$ form a
    transitive triangle or $x,y,z$ form a directed triangle. By
    triangle-transitivity of $Tr(QR_p)$, it suffices to consider the
    cases $x,y,z = 0,1,\infty$ (transitive triangle) and
    $x,y,z=0,1',\infty$ (directed triangle).

    \begin{description}
    \item[Case $x,y,z = 0,1,\infty$: ] Let
      $A = N^+(1) \setminus \acc{N^+(0)\cup N^+(\infty)}$. We clearly
      have
      $|N^+(0)\cup N^+(1) \cup N^+(\infty)| = |N^+(0)\cup N^+(\infty)|
      + |A|$.  Since we already know that
      $|N^+(0)\cup N^+(\infty)|=\frac{3p+1}2$ (see the previous point), let us
      focus on the set $A$. We have
      $N^+(1) = N^+_{QR_p}(1) \uplus N^+_{QR'_p}(1) \uplus
      \acc{\infty}$.  Since $N_{QR'_p}^+(1) \subset N^+(\infty)$ and
      $\acc{\infty} \subset N^+(0)$, we have
      $A = N_{QR_p}^+(1) \setminus \acc{N^+(0) \cup
        N^+(\infty)}$. Since vertex $\infty$ has no out-neighbor in
      $QR_p$, we have $A = N_{QR_p}^+(1) \setminus N^+(0)$, that
      corresponds to the set of out-neighbors of $1$ in $QR_p$ which
      are not out-neighbors of $0$. Since $QR_p$ is a tournament, the
      vertices which are not out-neighbors of a given vertex are the
      in-neighbors of this vertex. Therefore, the set $A$ corresponds
      to the set of out-neighbors of $1$ in $QR_p$ which
      are in-neighbors of $0$ and thus $A =  N_{QR_p}^+(1) \cap
      N^-(0)$. This set has already been considered in the proof of
      \Cref{prop:qr11} where we showed that $|A| = \frac{p+1}4$. 

      Therefore, the set $N^+(0)\cup N^+(1)\cup N^+(\infty)$ contains
      $|N^+(0)\cup N^+(\infty)| + |A| \ge \frac{3p+1}2+\frac{p+1}4 =
      \frac{7p+3}4$ vertices.

    \item[Case $x,y,z = 0,1',\infty$: ] Note that $N^+(1') =
      N^-(1)$. Thus
      $N^+(0)\cup N^+(1')\cup N^+(\infty) = N^+(0)\cup N^-(1)\cup
      N^+(\infty)$. Using the same kind of arguments as previous case,
      we get that $|N^+(0)\cup N^-(1)\cup N^+(\infty)| \ge \frac{7p+3}4$.
    \end{description}
  \end{itemize}  
\end{proof}

\section{Graphs with maximum degree 3}\label{sec:deg3}

In this section, we consider graphs with maximum degree $3$ and we
prove that they all admit a homomorphism to the same target graph on
nine vertices.

Duffy et al.~\cite{duffy_oriented_2019} proved that every
\emph{connected} graph with maximum degree~3 has an oriented chromatic
number at most 9. To achieve this bound, they use the Paley tournament
$QR_{7}$ which has vertex set $V(QR_{7}) = \{0,1,\ldots,6\}$ and
$\vt{uv}\in A(QR_{7})$ whenever $v-u\equiv r\pmod{7}$ for
$r\in\{1,2,4\}$ (see~\Cref{subfig:qr7}).  Here is a quick sketch of
their proof. They first consider $2$-degenerated graphs with maximum
degree~3 (not necessarily connected) and prove the following:
\begin{theorem}~\cite{duffy_oriented_2019}\label{thm:chris}
  Every $2$-degenerate graph with maximum degree $3$ which does not contain
  a $3$-source adjacent to a $3$-sink is $QR_7$-colorable.
\end{theorem}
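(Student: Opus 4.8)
The plan is to argue by minimal counterexample, using only the extension Properties $P_{1,3}$ and $P_{2,1}$ of $QR_7$ (\Cref{prop:qr11} with $p=7$) together with the intersection sizes computed inside that proof: $|N^+(a)|=|N^-(a)|=3$ for every vertex, and for all distinct $a,b$ one has $|N^+(a)\cap N^+(b)|=|N^-(a)\cap N^-(b)|=1$ and $|N^+(a)\cap N^-(b)|\in\{1,2\}$. Let $G$ be a counterexample with $|V(G)|$ minimum. The key preliminary observation is that deleting a vertex cannot create a $3$-source or a $3$-sink (every neighbour of the deleted vertex loses an incident arc, hence drops below degree $3$), so every proper vertex-induced subgraph of $G$ is again $2$-degenerate, of maximum degree at most $3$, contains no $3$-source adjacent to a $3$-sink, and is therefore $QR_7$-colourable.

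First I would dispose of the easy configurations. If $G$ has a vertex $v$ with $d(v)\le1$, colour $G-v$ and extend to $v$: trivially if $d(v)=0$, and using one of the three colours provided by $P_{1,3}$ if $d(v)=1$. Since $G$ is $2$-degenerate it then has a vertex $v$ of degree exactly $2$, say with neighbours $u_1$ and $u_2$; colour $G-v$ and try to extend. If both arcs at $v$ are oriented the same way at $v$, then $v$ can always be coloured (by $P_{1,3}$ if $\varphi(u_1)=\varphi(u_2)$, by $P_{2,1}$ with a compatible orientation vector otherwise). If the arcs are oriented transitively, say $\vt{u_1v},\vt{vu_2}\in A(G)$, then $v$ still extends as soon as $\varphi(u_1)\neq\varphi(u_2)$ -- by $P_{2,1}$ -- and this holds automatically when $u_1u_2\in A(G)$. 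Hence $G$ must contain the single remaining configuration: a degree-$2$ vertex $v$ with $\vt{u_1v},\vt{vu_2}\in A(G)$, with $u_1u_2\notin A(G)$, and such that every $QR_7$-colouring of $G-v$ gives $u_1$ and $u_2$ the same colour.

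From here the plan is to exploit this colour-rigidity to build structure around $v$. First, $d_G(u_1)=d_G(u_2)=3$: if for instance $u_1$ had at most one neighbour other than $v$, one could colour $G-v-u_1$, then choose for $u_1$ one of its at least three admissible colours avoiding $\varphi(u_2)$, producing a $QR_7$-colouring of $G-v$ with $\varphi(u_1)\neq\varphi(u_2)$. Writing $w_1,w_2$ for the remaining neighbours of $u_1$ and $x_1,x_2$ for those of $u_2$, rigidity forces, in every colouring, that $u_1$ has a \emph{unique} admissible colour given the colours of $w_1,w_2$, and likewise for $u_2$; by the intersection sizes above this pins down the local orientation patterns at $u_1$ and $u_2$ and forces $w_1,w_2$ (resp.\ $x_1,x_2$) to receive distinct colours in every colouring, which propagates the constraint one more level out. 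The plan is to iterate this propagation through a bounded neighbourhood of $v$ and show that the constraints can close up consistently only inside a configuration that contains a $3$-source adjacent to a $3$-sink -- contradicting the hypothesis on $G$, and finishing the proof.

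The hard part will be exactly this last step. A tempting shortcut -- delete $v$ and add the arc $\vt{u_2u_1}$ (the reverse orientation chosen so that $u_1$ does not become a $3$-source and $u_2$ does not become a $3$-sink, so that no forbidden pair is created), then colour by minimality and extend $v$ -- fails in general, because adding an edge can destroy $2$-degeneracy (a $2$-degenerate graph may contain a degree-$2$ vertex whose deletion-and-reconnection yields a $K_4$), so minimality does not apply to the modified graph. One is therefore pushed into the local case analysis, and this is precisely where the otherwise \emph{ad hoc} exclusion of a $3$-source adjacent to a $3$-sink is used: it is the obstruction that must be present whenever the forced colour constraints around an un-extendable degree-$2$ vertex refuse to close up. The remaining work is an elementary but somewhat tedious finite case check driven by the intersection sizes in $QR_7$ recorded at the start.
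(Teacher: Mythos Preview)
The paper does not prove this theorem at all: \Cref{thm:chris} is quoted from Duffy, MacGillivray and Sopena~\cite{duffy_oriented_2019} and used as a black box in the proof of \Cref{thm:deg3}. There is therefore nothing in the present paper to compare your argument against; the original proof lives in~\cite{duffy_oriented_2019}.

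As for your proposal itself, it is a plan rather than a proof. The reductions down to the single obstruction --- a degree-$2$ vertex $v$ on a transitive $2$-path $\vt{u_1v},\vt{vu_2}$ with $u_1u_2\notin A(G)$ and with every $QR_7$-colouring of $G-v$ forcing $\varphi(u_1)=\varphi(u_2)$ --- are fine, and the observation that this forces $u_1$ (and $u_2$) to have a \emph{unique} admissible colour given its two other neighbours is correct. But from that point on you only announce an ``iterate the propagation'' scheme and a ``tedious finite case check'' without carrying either out. That is precisely the substantive part: you have not shown that the rigidity actually propagates (the uniqueness at $u_1$ constrains the orientation pattern at $u_1$, but the next layer $w_1,w_2$ may each have degree $\le 3$ and their own freedom, and nothing yet forces their colours to be rigid), nor that the propagation terminates, nor that its only consistent closure contains a $3$-source adjacent to a $3$-sink. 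Without those steps the argument is open-ended. You also correctly identify that the tempting edge-addition shortcut can destroy $2$-degeneracy, so there is no cheap bypass; the case analysis really does have to be done, and here it is missing.
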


Then, given a \emph{connected} graph $G$ with maximum degree~3, they
firstly consider the case where $G$ contains a $3$-source. By removing
all $3$-sources from $G$, we obtain a graph $G'$ that is
$QR_7$-colorable by~\Cref{thm:chris}. It is then easy to put back all
the $3$-sources and color them with a new color $7$. Then,
subsequently, they consider the case where $G$ does not contain
$3$-sources. Removing any arc $\vt{uv}$ from $G$ leads to a graph $G'$
which admits a $QR_7$-coloring $\varphi$ by~\Cref{thm:chris}. To
extend $\varphi$ to $G$, it suffices to recolor $u$ and $v$ with two
new colors so that $\varphi(u)=7$ and $\varphi(v)=8$. This gives that
$G$ has an oriented chromatic number at most $9$.

The condition of connectivity of $G$ is a necessary condition in their
proof. Indeed, given a graph $G$ with maximum degree~3 which is not
connected, we need to remove one arc $\vt{u_iv_i}$ from each
$3$-regular component $C_i$ of $G$ to get a $2$-degenerate graph $G'$
which is $QR_7$-colorable by~\Cref{thm:chris}. However, to extend the
coloring to $G$ using two new colors, say color $7$ for the $u_i$'s
and color $8$ for the $v_i$'s, the colorings of each component must
agree on the neighbors of each $u_i$'s and on the neighbors of each
$v_i$'s, which is not necessarily the case. This potentially leads to
different target graphs on nine vertices for each
component. Therefore, even if each component has an oriented chromatic
number at most $9$, the whole graph may have an oriented chromatic
number strictly greater than $9$.

\bigskip

In the following, we prove this is not the case by showing that it is possible to color each component
with the same target graph $T_9$ on nine vertices whose construction
is described below. This implies that the condition of connectivity is
no more needed. 

\begin{figure}[h]
  \begin{center}
    \subfloat[$QR_7$.\label{subfig:qr7}]{\includegraphics[scale=0.7]{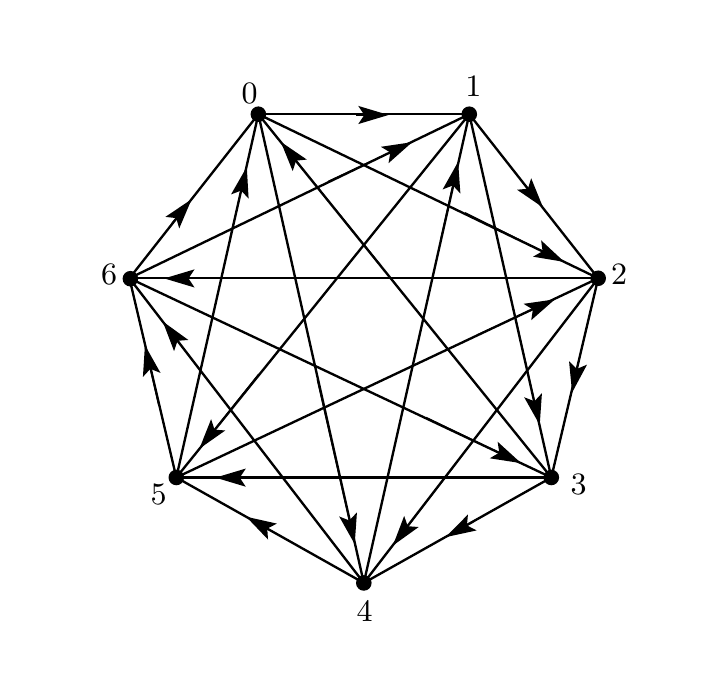}}
    \subfloat[$T_9$.\label{subfig:t9}]{\includegraphics[scale=0.7]{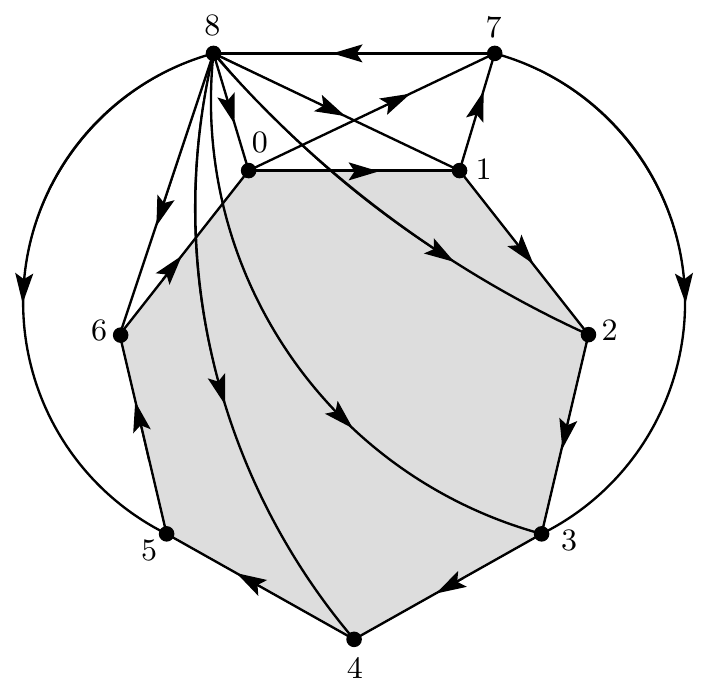}}
  \end{center}
  \caption{The oriented graphs $QR_7$ and $T_9$.}
  \label{fig:qr7t9}
\end{figure}

The oriented graph $T_9$ is obtained from $QR_7$
(see~\Cref{subfig:qr7}) by adding two vertices labelled~$7$ and~$8$,
and the arcs $\vt{07}$, $\vt{17}$, $\vt{73}$, $\vt{78}$, and $\vt{8i}$
for every $0\le i\le6$ (see~\Cref{subfig:t9} where the grey part
stands for $QR_7$).

We prove the following:
\begin{theorem}\label{thm:deg3}
  Every graph with maximum degree $3$ admits a $T_9$-coloring.
\end{theorem}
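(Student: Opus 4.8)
Since $T_9$ is a fixed graph, a $T_9$-colouring of $G$ is obtained by $T_9$-colouring each connected component separately and taking the union, so I would assume $G$ connected. In $T_9$ the two extra colours play rigid roles: $N^-(8)=\{7\}$ and $N^+(8)=\{0,\dots,6\}$, so a vertex mapped to $8$ may only have $7$-coloured in-neighbours but any $QR_7$-coloured out-neighbours; and $N^-(7)=\{0,1\}$, $N^+(7)=\{3,8\}$. The strategy in each component is to delete a carefully chosen arc (or nothing), colour the rest with $QR_7$ via \Cref{thm:chris}, and put the arc back using the colours $7$ and $8$.

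First I would dispose of the case that $G$ has a $3$-source. Let $G'$ be $G$ with all $3$-sources removed. A $3$-regular subgraph of $G'$ would, since $\Delta(G)\le 3$ and $G$ is connected, have to span all of $V(G)$, which is impossible because a $3$-source was deleted; hence $G'$ is $2$-degenerate. Also $G'$ has no $3$-source, so no $3$-source adjacent to a $3$-sink, and \Cref{thm:chris} yields $G'\to QR_7\subseteq T_9$. The $3$-sources of $G$ form an independent set and each of their neighbours lies in $G'$, hence receives a colour in $N^+(8)=\{0,\dots,6\}$; so mapping every $3$-source to $8$ extends the colouring to $G$. From now on $G$ has no $3$-source, so in particular no $3$-source adjacent to a $3$-sink.

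If $G$ is not $3$-regular, then being connected it is $2$-degenerate and \Cref{thm:chris} gives $G\to QR_7\subseteq T_9$ immediately. The remaining, main, case is that $G$ is connected, $3$-regular, and has no $3$-source. Then every in-degree is at least $1$ while the in-degrees sum to $\frac{3}{2}|V(G)|<2|V(G)|$, so some vertex $y$ has in-degree $1$; let $x$ be its in-neighbour. Delete $\vt{xy}$ to get $G_1$. A $3$-regular subgraph of $G_1$ would avoid the degree-$2$ vertices $x$ and $y$, hence span a $G$-closed subset of $V(G)\setminus\{x\}$, contradicting connectivity; so $G_1$ is $2$-degenerate, and it still has no $3$-source, so \Cref{thm:chris} gives a $QR_7$-colouring $\varphi$ of $G_1$. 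I would then recolour $x$ with $7$ and $y$ with $8$. The re-added arc is $\vt{78}\in A(T_9)$; since $y$ has in-degree $1$, the two other neighbours of $y$ are out-neighbours of $y$, mapped into $N^+(8)=\{0,\dots,6\}$, which is fine; so the only thing to verify is that the two neighbours of $x$ distinct from $y$ get colours in $N^-(7)=\{0,1\}$, resp.\ in $N^+(7)\cap\{0,\dots,6\}=\{3\}$, according to the direction of their arc to $x$.

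The main obstacle is exactly this last point, since an arbitrary $QR_7$-colouring of $G_1$ need not behave well around $x$. I would handle it either by proving a strengthened version of \Cref{thm:chris} that allows prescribing the colours (taken from $\{0,1,3\}$, consistently with the arcs among them) of a bounded number of vertices of degree at most $2$, applied to the two neighbours of $x$ other than $y$, which have degree at most $2$ in $G_1-x$; or by a short case analysis on the out-degree of $x$ and on the adjacencies among $x,y$ and those two neighbours, in each subcase choosing the deleted arc and ordering the greedy extension so that the two neighbours are coloured last and can be pushed into $\{0,1,3\}$ using Properties $P_{1,3}$ and $P_{2,1}$ of $QR_7$; the few small $3$-regular graphs (such as $K_4$ or $K_{3,3}$ with their various orientations) for which such an ordering is too constrained can be checked by hand. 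Everything else is routine bookkeeping.
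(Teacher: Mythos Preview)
Your plan mirrors the paper's structure exactly up to the point you label ``the main obstacle'': reduce to a connected component, dispose of the $3$-source case by colouring all $3$-sources with $8$, handle the $2$-degenerate case directly via \Cref{thm:chris}, and in the $3$-regular source-free case delete an arc $\vt{xy}$ with $y$ of in-degree~$1$, colour the rest by $QR_7$, and set $\varphi(x)=7$, $\varphi(y)=8$.

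The gap is precisely where you say it is, and the missing idea is simpler than either of your proposed workarounds: use the \emph{arc-transitivity of $QR_7$}. Since $G$ has no $3$-source, $x$ has an in-neighbour $u_1$ among its two neighbours other than $y$; call the remaining one $u_2$. The colours $\varphi(u_1),\varphi(u_2)$ lie in the tournament $QR_7$, so either they coincide (possible only when both are in-neighbours of $x$) or there is an arc between them. Composing $\varphi$ with an automorphism of $QR_7$ lets you send any prescribed vertex to any vertex, and any prescribed arc to any arc. Thus: if $u_2$ is an in-neighbour of $x$, map $\{\varphi(u_1),\varphi(u_2)\}$ into $\{0,1\}=N^-(7)$; if $u_2$ is an out-neighbour, then $\varphi(u_1)\ne\varphi(u_2)$ (one is an in- and the other an out-neighbour of the coloured vertex $x$ in $G'$), and you map the arc between them to one of $\vt{13}$ or $\vt{30}$ so that $\varphi(u_1)\in\{0,1\}$ and $\varphi(u_2)=3$. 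In every case $\varphi(x)=7$, $\varphi(y)=8$ then works. No strengthened version of \Cref{thm:chris}, no greedy reordering, and no hand-check of small graphs is needed.
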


\begin{proof}
  It is sufficient to show that every connected graph $G$
  with maximum degree $3$ admits a $T_9$-coloring. 
  We consider the following cases.
  \begin{itemize}
  \item We suppose that $G$ is $2$-degenerate or $G$ contains a
    $3$-source. Let $G'$ be the oriented graph obtained from $G$ by
    removing every $3$-source. Since $G'$ is $2$-degenerate and
    contains no $3$-source, $G'$ admits a $QR_7$-coloring $\varphi$
    by~\Cref{thm:chris}. We extend $\varphi$ to a $T_9$-coloring of
    $G$ by setting $\varphi(u)=8$ for every $3$-source $u$ of $G$
    (indeed, the vertex $8$ of $T_9$ dominates all the vertices of
    $QR_7$).
    \begin{figure}[h]
      \centering
      \includegraphics{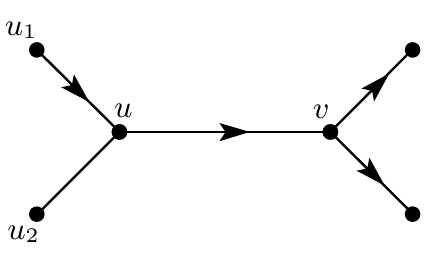}
      \caption{Configuration of~\Cref{thm:deg3}.}
      \label{fig:config-deg3}
    \end{figure}

  \item We suppose that $G$ is $3$-regular and contains no $3$-source.
    Notice that $G$ necessarily contains a vertex $v$ of out-degree
    two. Let $u$ denote the in-neighbor of $v$. Since $u$ is not a
    $3$-source, it has an in-neighbor $u_1$. Let $u_2$ denote the
    neighbor of $u$ distinct from $u_1$ and $v$
    (see~\Cref{fig:config-deg3}). We consider the graph $G'$ obtained
    from $G$ by removing the arc $\vt{uv}$. Since $G'$ is
    $2$-degenerate and contains no $3$-source, $G'$ admits a
    $QR_7$-coloring $\varphi$ by~\Cref{thm:chris}.
    \begin{itemize}
    \item If $G$ (or equivalently $G'$) contains the arc $\vt{uu_2}$,
      then we necessarily have $\varphi(u_1)\ne \varphi(u_2)$. If
      $\vt{\varphi(u_1)\varphi(u_2)}\in A(QR_7)$
      (resp. $\vt{\varphi(u_2)\varphi(u_1)}\in A(QR_7)$), we recolor
      $G'$ so that $\varphi(u_1)=1$ (resp. $\varphi(u_1)=0$) and
      $\varphi(u_2)=3$ by the arc-transitivity of $QR_7$. It can be
      easily checked that we can extend $\varphi$ to a $T_9$-coloring
      of $G$ by setting $\varphi(u)=7$ and $\varphi(v)=8$.

    \item If $G$ contains $\vt{u_2u}$, then by the arc-transitivity of
      $QR_7$, we can assume that $\acc{\varphi(u_1),\varphi(u_2)}$
      $\subseteq\acc{0,1}$. Again, $\varphi$ can be extended to
      $T_9$-coloring of $G$ by setting $\varphi(u)=7$ and
      $\varphi(v)=8$.
    \end{itemize}
  \end{itemize}
\end{proof}

\section{Graphs with maximum degree at least $4$}\label{sec:deg4}

In this section, we consider graphs with maximum degree at least~$4$.

Duffy et al.~\cite{duffy_oriented_2019} recently proved that
every \emph{connected} graph with maximum degree~$4$ has an oriented
chromatic number at most $69$. To prove their result, they consider
the case of $3$-degenerate graphs with maximum degree~$4$ and prove
that they admit a homomorphism to the Paley tournament $QR_{67}$ on $67$ vertices; they
then show how to extend such a $67$-coloring to \emph{connected} graphs with maximum
degree~$4$ using two more colors, leading to a $69$-coloring.

We propose a general result which determines properties of a target
graph to be universal for graphs of maximum degree $\Delta\ge 4$. As
for graphs with maximum degree~$3$ (see~\Cref{sec:deg3}), the
condition of connectivity is not needed. As an example, our general
result substantially decreases the bound of~$69$ colors for graphs
with maximum $4$ due to Duffy et al.~\cite{duffy_oriented_2019}
to~$26$ colors.

\begin{theorem}\label{thm:deg5-8-degenerate}
  Every $(\Delta-1)$-degenerate graph with maximum degree $\Delta\ge 2$
  admits a $T$-coloring where $T$ is a graph on $n$ vertices with
  Properties $P_{\Delta-1,\Delta-2}$ and
  $C_{\Delta-2,\frac{n(\Delta-2)}{\Delta-1}+1}$.
\end{theorem}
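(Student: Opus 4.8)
The plan is to order the vertices of $G$ as $v_1,v_2,\dots,v_m$ witnessing $(\Delta-1)$-degeneracy, so that each $v_i$ has at most $\Delta-1$ neighbours among $v_1,\dots,v_{i-1}$ (its \emph{back-neighbours}), and to build a $T$-coloring $\varphi$ greedily in this order. When we reach $v_i$, it has $j \le \Delta-1$ already-coloured back-neighbours $w_1,\dots,w_j$, with prescribed arc directions recorded by an orientation $j$-vector $\alpha$. We need a vertex of $T$ that is an $\alpha$-successor of $(\varphi(w_1),\dots,\varphi(w_j))$; if one exists we set $\varphi(v_i)$ to be that vertex and continue. If $j \le \Delta-1$ and the $\varphi(w_\ell)$ happen to be pairwise distinct, Property $P_{\Delta-1,\Delta-2}$ (together with the monotonicity-type observation that a valid successor for a shorter compatible sequence is obtained by padding) gives at least $\Delta-2 \ge 1$ choices, so the step succeeds. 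Hence the only real difficulty is a step where two (or more) back-neighbours of $v_i$ received the same colour — then the sequence is still compatible but we cannot invoke $P_{\Delta-1,\Delta-2}$ directly on $\Delta-1$ distinct entries, and it may genuinely have no $\alpha$-successor.

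To handle that obstruction I would not colour greedily blindly but instead maintain the invariant that the colours of the back-neighbours of the \emph{current} vertex are pairwise distinct, achieved by a look-ahead/recolouring argument. Concretely: before colouring $v_i$, inspect the set of its back-neighbours; these are at most $\Delta-1$ vertices, but they form a clique only in degenerate bad cases. The key case is when $v_i$ has two back-neighbours, say $w_1,w_2$, that are \emph{not} adjacent in $G$ and have been given the same colour $c$. The idea is to use Property $C_{\Delta-2,\frac{n(\Delta-2)}{\Delta-1}+1}$ to recolour one of them. Here is the counting that makes this work: consider $w_1$ together with the (at most $\Delta-2$) other already-coloured neighbours it shares constraints with; their colours form a clique of size at most $\Delta-2$ in $T$, and $C_{\Delta-2,\frac{n(\Delta-2)}{\Delta-1}+1}$ says the union of their out-neighbourhoods (and of their in-neighbourhoods) has size at least $\frac{n(\Delta-2)}{\Delta-1}+1$. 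A forbidden colour for $w_1$ is one that is not a correct successor for the directions imposed by $w_1$'s already-coloured neighbours, and a short pigeonhole computation — this is where the specific value $\frac{n(\Delta-2)}{\Delta-1}+1$ is tailored — shows the number of colours \emph{simultaneously} forbidden by all $\le \Delta-1$ of $w_1$'s coloured neighbours plus the colour $c$ we must avoid is strictly less than $n$, leaving a legal recolour of $w_1$ different from $c$.

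The steps in order, then, are: (1) fix the degeneracy order and set up notation for back-neighbours and orientation vectors; (2) record the easy generic step using $P_{\Delta-1,\Delta-2}$ plus a padding remark that a compatible shorter sequence inherits successors, so that whenever the back-neighbours have distinct colours the greedy extension works; (3) isolate the bad configuration (repeated colour among non-adjacent back-neighbours of the current vertex, the adjacent case being impossible since $T$ has no loops) and carry out the $C$-property recolouring to restore distinctness without creating new conflicts with already-processed vertices; (4) observe that the recolouring only touches $w_1$ and is compatible with \emph{its} back-neighbours (all earlier in the order), so no cascade occurs, and conclude by induction on $i$. I expect step (3)–(4) to be the crux: one must verify that the recolouring count is valid \emph{and} that changing $\varphi(w_1)$ does not break the colouring of vertices coloured between $w_1$ and $v_i$, which is where the degeneracy order and a careful choice of \emph{which} of the two equal-coloured vertices to recolour (the one appearing later, with fewer already-coloured neighbours to respect) are essential; handling $\Delta=2$ and $\Delta=3$ as small base cases may also need a separate remark since the formula $\frac{n(\Delta-2)}{\Delta-1}+1$ degenerates there.
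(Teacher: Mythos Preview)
Your proposal misapplies Property $C_{\Delta-2,\frac{n(\Delta-2)}{\Delta-1}+1}$, and this is a genuine gap. You attempt to apply it to the colours of $w_1$'s already-coloured neighbours, claiming these colours ``form a clique of size at most $\Delta-2$ in $T$''. They need not: two neighbours of $w_1$ are not in general adjacent in $G$, so their images under $\varphi$ are not in general adjacent in $T$, and the hypothesis of $C$ fails. Your recolouring count in step~(3) is therefore unjustified. Even setting this aside, the cascade you flag in step~(4) is real and unresolved: a recoloured $w_1$ may now coincide with some \emph{other} back-neighbour $w_3$ of $v_i$ (or of a later vertex), and nothing in your argument prevents this.

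The paper's argument is structurally different and deploys $C$ in the opposite direction. It takes a minimal counter-example $G$, picks a vertex $u$ of degree $k\le\Delta-1$, and splits on whether the neighbourhood $\{v_1,\dots,v_k\}$ of $u$ contains an arc. If it does (say $\vt{v_1v_2}$), one deletes the edge $uv_1$, colours the rest by minimality, and then recolours $v_1$ among its $\Delta-2$ legal choices so as to be compatible with $\varphi(v_3),\dots,\varphi(v_k)$ with respect to the arcs $uv_3,\dots,uv_k$ (at most $\Delta-3$ constraints against $\Delta-2$ choices); compatibility with $\varphi(v_2)$ is automatic because $v_1v_2$ is an arc. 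This case uses only $P$. If the neighbourhood of $u$ is independent, one deletes $u$ and colours the rest. Each $v_i$ then has $\Delta-2$ legal recolours, and it is \emph{these} $\Delta-2$ colours that form a clique in $T$ (they are all $\alpha$-successors of a common sequence). Property $C$ applied to that clique says that at least $\frac{n(\Delta-2)}{\Delta-1}+1$ colours of $T$ are reachable for $u$ via some recolouring of $v_i$; equivalently, $v_i$ forbids at most $\frac{n}{\Delta-1}-1$ colours for $u$. Summing over the $k\le\Delta-1$ neighbours gives fewer than $n$ forbidden colours, so a colour for $u$ exists, and the corresponding recolourings of the $v_i$'s do not interfere because the $v_i$'s are pairwise non-adjacent. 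The role of $C$ is thus to bound, per neighbour, how many colours for $u$ are unreachable---not to recolour a single vertex to dodge a collision as you propose.
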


\begin{proof}
  Let $G$ be a minimal counter-example to
  \Cref{thm:deg5-8-degenerate}. By definition, $G$ contains a $k$-vertex $u$ with
  $k\le \Delta-1$. Let $v_1,v_2,\dots,v_k$ be the neighbors of $u$.

  Suppose first that there exists one arc in the neighborhood of $u$,
  and w.l.o.g. assume that $\vt{v_1v_2}$ is an arc of $G$. The graph
  $G'$ obtained from $G$ by removing the arc $uv_1$ admits a
  $T$-coloring~$\varphi$ by minimality of $G$. Vertex $v_1$ has at
  most $\Delta-1$ colored neighbors $w_1,\dots,w_{\Delta-1}$ by
  $\varphi$ ($u$ is uncolored). Note that some $w_i$'s may coincide
  with some $v_i$'s. The sequence
  $S=(\varphi(w_1),\varphi(w_2),\dots,\varphi(w_{\Delta-1}))$ is
  compatible w.r.t. the orientations of the arcs $w_iv_1$ since $v_1$
  and every $w_i$ are colored in $G'$. We uncolor $v_1$ and by
  Property $P_{\Delta-1,\Delta-2}$ of $T$, we have $\Delta-2$
  available colors for $v_1$ among which at least one, say color $c$,
  is compatible with $\varphi(v_3),\varphi(v_4),\dots,\varphi(v_{k})$
  w.r.t. to the orientation of the arcs $uv_1, uv_3, uv_4,\dots,uv_k$.
  Set $\varphi(v_1)=c$. Note that we necessarily have
  $c\notin\acc{\varphi(v_2),\at(\varphi(v_2))}$ since $\vt{v_1v_2}$ is
  an arc of $G$ and thus $c$ is compatible with
  $\varphi(v_2)$. Note also that
  the sequence $S=(\varphi(v_2),\varphi(v_3),\dots,\varphi(v_{k}))$ is
  compatible w.r.t. the orientations of the arcs $uv_i$,
  $2\le i\le k$, since vertices $v_2,v_3,\dots,v_k$ are colored in
  $G'$. Therefore, the sequence
  $S=(\varphi(v_1),\varphi(v_2),\dots,\varphi(v_{k}))$ is compatible
  with the orientation of the arcs $uv_i$, $1\le i\le k$. By Property
  $P_{\Delta-1,\Delta-2}$, we have at least $\Delta-2$ available
  colors for $u$. Therefore, $\varphi$ can be extended to a
  $T$-coloring of $G$, a contradiction.
  
  Assume now that there is no arc in the neighborhood of $u$. The
  graph $G'$ obtained from $G$ by removing the vertex $u$ 
  admits a $T$-coloring $\varphi$ by minimality of $G$. Let
  $w_i$, $1\le i \le \Delta-1$, be the neighbors of $v_1$ distinct from
  $u$. Since $v_1$ and every $w_i$ are colored in $G'$, the sequence
  $S=(\varphi(w_1),\varphi(w_2),\dots,\varphi(w_{\Delta-1}))$ is compatible
  with the orientations of the arcs $vw_i$ and by Property
  $P_{\Delta-1,\Delta-2}$, we have $\Delta-2$ available colors
  for $v_1$. Note that these $\Delta-2$ colors
  (which are vertices of $T$) form
  a clique subgraph of $T$. The same holds
  for every $v_i$ (i.e. each $v_i$ has  $\Delta-2$ available colors)
  and since there is no arc in the neighborhood of 
  $u$, each $v_i$ can be colored independently.

  By~Property $C_{\Delta-2,\frac{n(\Delta-2)}{\Delta-1}+1}$, given
  these $\Delta-2$ possible colors for $v_1$, there are at least
  $\frac{n(\Delta-2)}{\Delta-1}+1$ choices of colors for $u$; thus
  $v_1$ forbids at most $n-\paren{\frac{n(\Delta-2)}{\Delta-1}+1}=\frac{n}{\Delta-1}-1$
  colors for $u$. By the same argument, each $v_i$ forbids
  at most $\frac{n}{\Delta-1}-1$ colors for $u$. Therefore, $u$ has at
  most
  $k(\frac{n}{\Delta-1}-1) \le (\Delta-1)(\frac{n}{\Delta-1}-1) =
  n-\Delta+1 < n$ forbidden colors since $\Delta\ge 2$.  This means
  that there are at least one available color for $u$. Therefore,
  $\varphi$ can be extended to a $T$-coloring of $G$, a contradiction.
\end{proof}

To achieve our bounds on chromatic oriented number, we apply Tromp's construction to Paley
tournaments $QR_{p}$.
Moreover, given a graph $Tr(QR_p)$, we construct the graph $Tr^*(QR_p)$ on $2p+4$
vertices by adding
two vertices $t_0$ and $t_1$ such that $t_0$ is a twin vertex of
vertex $0$ (\ie a vertex with the same neighborhood as vertex $0$) and
$t_1$ is a twin vertex of vertex $1$. We finally add the arc
$\vt{t_1t_0}$. 

\begin{theorem}\label{thm:deg5-8}
  Every graph
  with maximum degree $\Delta\ge 4$ admits a $Tr^*(QR_p)$-coloring where
  $Tr(QR_p)$ is a Tromp graph (built from a Paley tournament $QR_p$)
  with Properties $P_{\Delta-1,\Delta-2}$ and \linebreak
  $C_{\Delta-2,\frac{(2p+2)(\Delta-2)}{\Delta-1}+1}$.
\end{theorem}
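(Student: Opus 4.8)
The plan is to reduce \Cref{thm:deg5-8} to \Cref{thm:deg5-8-degenerate} applied to $Tr(QR_p)$, and then handle the extra edge by using the two additional vertices $t_0$ and $t_1$ of $Tr^*(QR_p)$. First I would recall the standard degeneracy structure of bounded-degree graphs: every graph $G$ with maximum degree $\Delta$ is either $(\Delta-1)$-degenerate, or it has a $\Delta$-regular component. More precisely, it suffices (as in the proof of \Cref{thm:deg3}) to treat a connected graph $G$ with maximum degree $\Delta$; if $G$ is not $\Delta$-regular then it is $(\Delta-1)$-degenerate, and \Cref{thm:deg5-8-degenerate} applied to $T=Tr(QR_p)$ (which has $n=2p+2$ vertices and, by hypothesis, Properties $P_{\Delta-1,\Delta-2}$ and $C_{\Delta-2,\frac{(2p+2)(\Delta-2)}{\Delta-1}+1}$) gives a $Tr(QR_p)$-coloring, hence a $Tr^*(QR_p)$-coloring since $Tr(QR_p)$ is an induced subgraph of $Tr^*(QR_p)$.

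The substantive case is when $G$ is connected and $\Delta$-regular. Here I would mimic the second bullet of the proof of \Cref{thm:deg3}: pick an arc $\vt{uv}$ of $G$, remove it to obtain $G'$, which now has a vertex of degree $\Delta-1$ and — being connected with a vertex of degree $<\Delta$ — is $(\Delta-1)$-degenerate. By \Cref{thm:deg5-8-degenerate}, $G'$ admits a $Tr(QR_p)$-coloring $\varphi$. The goal is to extend $\varphi$ to $G$ by recoloring $u$ and $v$ with the two new vertices $t_1$ and $t_0$ respectively (note $\vt{t_1t_0}\in A(Tr^*(QR_p))$ handles the arc $\vt{uv}$). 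Since $t_1$ is a twin of $1$ and $t_0$ is a twin of $0$ in $Tr(QR_p)$, assigning $\varphi(u)=t_1$ is legal provided every neighbor $w$ of $u$ other than $v$ satisfies the arc constraint that $1$ would satisfy, i.e. we need $\varphi\restriction_{N(u)\sm v}$ to be, after possibly recoloring, compatible with vertex $1$; similarly for $v$ and vertex $0$.

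To arrange this, I would first recolor $v_1:=v$ so that its $\Delta-1$ remaining neighbors in $G'$ all receive colors consistent with what $0$ demands, and recolor $u$ so that its $\Delta-1$ remaining neighbors are consistent with what $1$ demands. The key point: by Property $P_{\Delta-1,\Delta-2}$ of $Tr(QR_p)$, the $\Delta-1$ colored neighbors of $v$ (with their arc orientations toward/from $v$) admit at least $\Delta-2\ge2$ distinct common $\alpha$-successors, and similarly for $u$; moreover one checks, exactly as in the arc case of \Cref{thm:deg5-8-degenerate}, that one can choose recolorings of $u$ and $v$ simultaneously that remain mutually compatible (the arc $\vt{uv}$ is no longer present in $G'$, so $\varphi(u)$ and $\varphi(v)$ need not even be adjacent after recoloring). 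Concretely: recolor $v$ to a color $c_v$ that is an $\alpha$-successor of its $G'$-neighbors, then recolor $u$ to a color $c_u$ that is an $\alpha$-successor of its $G'$-neighbors among the $\ge\Delta-2$ choices, which is possible since the constraint imposed by $\varphi(v)=c_v$ on $c_u$ rules out at most the two colors $\{c_v,\at(c_v)\}$ — but actually we don't need $c_u$ adjacent to $c_v$ at all in $G'$, so all $\ge\Delta-2$ choices survive. Now set $\varphi(v)=t_0$ and $\varphi(u)=t_1$: this is valid because $t_0$ is a twin of $0$ and we have made $v$'s neighborhood look exactly like an admissible neighborhood for a vertex playing the role of "$0$," and likewise $t_1$ is a twin of $1$. (If the vertex we recolored $v$ to was some color other than $0$, use vertex-/arc-transitivity of $Tr(QR_p)$ to first normalize so that the color of $v$ is $0$ and that of $u$ is $1$, exactly as done in \Cref{thm:deg3}; this is where we need $\vt{t_1t_0}$ and not $\vt{01}$ — in $QR_7$ the pair $0,1$ is an arc, and after Tromp's construction $\vt{01}$ remains an arc, so $\vt{t_1t_0}$ is consistent.)

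The main obstacle I anticipate is the bookkeeping in the $\Delta$-regular case: verifying that after removing $\vt{uv}$ one really does get a $(\Delta-1)$-degenerate graph (which follows from connectivity plus the existence of a vertex of degree $\Delta-1$), and, more delicately, checking that the normalization step via transitivity does not conflict with the already-chosen colors of the neighbors of $u$ and $v$ — one must recolor the whole of $G'$ by an automorphism of $Tr(QR_p)$, not just $u$ and $v$, so that the argument that $t_0,t_1$ can be appended goes through cleanly. I expect this to be routine but requires care. The rest — the degenerate case, the counting, the use of Properties $P$ and $C$ — is immediate from \Cref{thm:deg5-8-degenerate} and the fact that $Tr^*(QR_p)$ has $2p+4$ vertices while $Tr(QR_p)$ has $2p+2$, so the hypothesis $C_{\Delta-2,\frac{(2p+2)(\Delta-2)}{\Delta-1}+1}$ is exactly $C_{\Delta-2,\frac{n(\Delta-2)}{\Delta-1}+1}$ with $n=2p+2=|V(Tr(QR_p))|$ as required by \Cref{thm:deg5-8-degenerate}.
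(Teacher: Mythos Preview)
Your approach is essentially the paper's: reduce to \Cref{thm:deg5-8-degenerate} for the $(\Delta-1)$-degenerate case, and for a connected $\Delta$-regular component delete one arc $\vt{uv}$, color the resulting graph in $Tr(QR_p)$, normalize via transitivity, and then use the twin vertices $t_0,t_1$ to absorb the missing arc. That skeleton is correct.

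There is, however, a genuine gap in your execution of the regular case. You explicitly dismiss the constraint that $c_u$ be adjacent to $c_v$ (``we don't need $c_u$ adjacent to $c_v$ at all in $G'$''), but two lines later you invoke arc-transitivity of $Tr(QR_p)$ to normalize $(\varphi(v),\varphi(u))$ to $(0,1)$. Arc-transitivity only acts on arcs: if $c_u\in\{c_v,\at(c_v)\}$ there is no arc to transport, and if the arc points the wrong way you cannot send it to $\vt{01}$. The paper fixes this cleanly: first use vertex-transitivity to set $\varphi(v)=0$ (no recoloring of $v$ is needed; it is already properly colored in $G'$); then recolor $u$ using $P_{\Delta-1,\Delta-2}$, observing that among the $\Delta-2\ge2$ candidate colors no two are anti-twins, so at most one lies in $\{0,0'\}$ and one can choose $\varphi(u)\notin\{0,0'\}$. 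This guarantees $\varphi(u)$ and $\varphi(v)=0$ are adjacent in $Tr(QR_p)$. Now split: if $\vt{\varphi(u)\,0}\in A(Tr(QR_p))$ the coloring already extends to $G$; otherwise $\vt{0\,\varphi(u)}\in A(Tr(QR_p))$, arc-transitivity sends it to $\vt{01}$, and replacing $(0,1)$ by $(t_0,t_1)$ repairs the arc $\vt{uv}$ since $\vt{t_1t_0}\in A(Tr^*(QR_p))$. Your write-up omits this case split and the adjacency argument that makes the normalization legitimate; once you reinstate them (you already noted the relevant constraint before discarding it), the proof is complete and matches the paper's.
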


\begin{proof}
  Since $Tr(QR_p)$ is a subgraph of $Tr^*(QR_p)$, it remains to prove that
  every connected $\Delta$-regular graph $H$ admits a $Tr^*(QR_p)$-coloring.
  
  Let $H'=H\setminus\acc{\vt{uv}}$ where $\vt{uv}$ is any arc of
  $H$. By~\Cref{thm:deg5-8-degenerate}, $H'$ admits a
  $Tr(QR_p)$-coloring~$\varphi$. By vertex-transitivity of $Tr(QR_p)$, we may
  assume that $\varphi(v)=0$. By Property~$P_{\Delta-1,\Delta-2}$ of $Tr(QR_p)$, we have $\Delta-2$
  available colors $c_1,c_2,\dots,c_{\Delta-2}$ for $u$.
  Note that, given $1\le i < j \le \Delta-2$, we necessarily
  have $c_i\neq \at(c_j)$. We thus recolor $u$ with one of the $c_i$'s
  so that $\varphi(u) \notin \acc{0,0'}$ since $\Delta\ge 4$.
  Therefore, $\varphi(u)\varphi(v)$ is an arc of $Tr(QR_p)$.
  
  If $\vt{\varphi(u)\varphi(v)}$ is an arc of $Tr(QR_p)$, then $\varphi$
  is a $Tr(QR_p)$-coloring of $H$ and thus a $Tr^*(QR_p)$-coloring of
  $H$. Therefore, $\vt{\varphi(v)\varphi(u)}$ is an arc of $Tr(QR_p)$. By
  arc-transitivity of $Tr(QR_p)$, we may assume that $\varphi(u)=1$. We
  recolor $u$ and $v$ so that $\varphi(u)=t_1$ and
  $\varphi(v)=t_0$. Since each $t_i$ is a twin vertex of vertex $i$ in
  $Tr^*(QR_p)$ and $\vt{t_1t_0}$ is an arc $Tr^*(QR_p)$, it is easy to verify
  that $\varphi$ is now a $Tr^*(QR_p)$-coloring of $H$.
\end{proof}

The Properties \pnk of Paley tournaments $QR_p$ can be expressed
by a formula for $n\le2$ (see \Cref{prop:pnk}). For higher values of
$n$, there are no formula and some
properties are only known for small values of $p$. In our case, we are
interested in properties of the form $P_{n,n}$. We
computed these properties using a computer-check.

\begin{proposition}\label{thm:properties}
  The smallest Paley tournament with Property $P_{2,2}$
  (resp. $P_{3,3}$, $P_{4,4}$, $P_{5,5}$) is $QR_{11}$
  (resp. $QR_{43}$, $QR_{151}$, $QR_{659}$).
\end{proposition}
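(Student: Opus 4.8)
The plan is to separate the case $n=2$, which is not actually computational, from the cases $n\in\{3,4,5\}$, which are resolved by a finite exhaustive check whose feasibility relies on the symmetries of $QR_p$.

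\textbf{The case $P_{2,2}$.} Here I would appeal directly to \Cref{prop:qr11}, and more precisely to the exact counts established in its proof. Let $S=(v_1,v_2)$ be a sequence of two vertices of $QR_p$ with a compatible orientation $2$-vector $\alpha$. If $v_1=v_2$, then compatibility forces $\alpha_1=\alpha_2$, and the number of $\alpha$-successors of $S$ is the in- or out-degree of $v_1$, namely $\frac{p-1}{2}$. If $v_1\ne v_2$, then since $QR_p$ is a tournament the pair is an arc, so by arc-transitivity we may assume $\{v_1,v_2\}=\{0,1\}$, and then all four orientation vectors are compatible; the proof of \Cref{prop:qr11} shows that the number of $\alpha$-successors equals exactly $\frac{p-3}{4}$ for three of the four choices of $\alpha$ and $\frac{p+1}{4}$ for the fourth. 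Hence, for $p\ge 5$, the minimum number of common $\alpha$-successors over all compatible pairs $(S,\alpha)$ with $|S|=2$ is $\min\{\frac{p-1}{2},\frac{p-3}{4}\}=\frac{p-3}{4}$, so $QR_p$ has Property $P_{2,2}$ if and only if $\frac{p-3}{4}\ge 2$, i.e. $p\ge 11$. As the only Paley tournament orders below $11$ are $p=3$ and $p=7$ (giving $\frac{p-3}{4}=0$ and $1$), the smallest Paley tournament with Property $P_{2,2}$ is $QR_{11}$.

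\textbf{The cases $P_{3,3}$, $P_{4,4}$, $P_{5,5}$.} For these values of $n$ there is no closed formula, so I would perform an exhaustive computer check, organised as follows. For a fixed prime power $p\equiv 3\pmod 4$ and a fixed $n$, deciding whether $QR_p$ has Property $P_{n,n}$ amounts to verifying, for every sequence $S=(v_1,\dots,v_n)$ of vertices and every compatible orientation $n$-vector $\alpha$, that the set of $\alpha$-successors of $S$ --- an $n$-fold intersection of in- and out-neighbourhoods --- has size at least $n$. To make this tractable I would exploit the symmetries of $QR_p$: by vertex-transitivity one may fix $v_1=0$, and the group generated by the translations $x\mapsto x+a$ and the dilations $x\mapsto gx$ by non-zero quadratic residues (automorphisms of $QR_p$), together with the arc-reversing involution $x\mapsto -x$ (which sends $\alpha$-successors of $S$ to $(-\alpha)$-successors of $-S$, since $-1$ is a non-residue), reduces the number of pairs $(S,\alpha)$ to be examined by a further factor of about $\frac{p(p-1)}{2}$. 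Counting the successors of a single pair $(S,\alpha)$ is then $O(np)$ bit operations, and one may stop early as soon as $n$ successors are found.

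\textbf{Conclusion and main obstacle.} For each $n\in\{3,4,5\}$ the proposition requires two things: (i) that $QR_p$ has Property $P_{n,n}$ for the claimed value $p$ (that is, $p=43,151,659$ respectively), established by running the exhaustive check above and observing that every pair $(S,\alpha)$ admits at least $n$ common successors; and (ii) that $QR_{p'}$ fails Property $P_{n,n}$ for every smaller Paley tournament order $p'$, for which it suffices to exhibit for each such $p'$ a single compatible pair $(S,\alpha)$ with at most $n-1$ common successors --- a short certificate that is independently verifiable. Since the set of prime powers $p'\equiv 3\pmod 4$ below the claimed threshold is finite in each case (for instance $\{3,7,11,19,23,27,31\}$ for $n=3$), the statement reduces to finitely many bounded computations. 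The only genuine difficulty is computational feasibility of the largest positive instance, $n=5$ and $p=659$, where a naive enumeration over $p^{4}\cdot 2^{5}$ pairs $(S,\alpha)$ is borderline; this is precisely where the reductions above --- fixing $v_1=0$, quotienting by the order-$\frac{p(p-1)}{2}$ automorphism group, and early termination --- are essential, after which the search is comfortably within reach.
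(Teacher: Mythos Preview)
Your proposal is correct. The paper's own proof consists of the two words ``By computer.''

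The substantive difference is your treatment of $P_{2,2}$: you observe that this case is not computational at all, since the exact counts in the proof of \Cref{prop:qr11} already show that the minimum number of $\alpha$-successors of a compatible $2$-sequence in $QR_p$ is $\frac{p-3}{4}$, whence $QR_p$ has $P_{2,2}$ if and only if $p\ge 11$. The paper does not isolate this case and simply folds it into the computer check, even though its own \Cref{prop:qr11} gives the closed form. For $n\in\{3,4,5\}$ your approach is the same as the paper's (an exhaustive computer search), but you spell out the symmetry reductions and feasibility estimates that the paper leaves implicit. One small quibble: after fixing $v_1=0$ by vertex-transitivity, the \emph{further} reduction from dilations by quadratic residues together with the arc-reversing map $x\mapsto -x$ is a factor of roughly $p-1$, not $\frac{p(p-1)}{2}$; but this is only a comment on running time and does not affect the correctness of the argument.
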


\begin{proof}
  By computer.
\end{proof}

Determining more of such properties would be possible with more
computing power and time. Note that Paley tournament $QR_{151}$ has in
fact Property $P_{5,6}$ and there exist no smaller Paley tournament
with Property $P_{5,5}$. 

As a corollary of \Cref{prop:pnk,thm:properties}, we get the following.
\begin{corollary}\label{coro:tromp}
  The Tromp graph $Tr(QR_{11})$ (resp. $Tr(QR_{43})$, $Tr(QR_{151})$,
  $Tr(QR_{659})$) has Property $P_{3,2}$ (resp. $P_{4,3}$, $P_{5,4}$,
  $P_{6,5}$).
\end{corollary}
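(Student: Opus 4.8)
The plan is to combine \Cref{prop:pnk} (which transfers $P_{n-1,k}$ of $QR_p$ to $P_{n,k}$ of $Tr(QR_p)$) with the explicit thresholds established in \Cref{thm:properties}. The four claimed statements are each an instance of this: knowing that $QR_p$ has $P_{n,n}$ for the listed values of $p$, we directly obtain that $Tr(QR_p)$ has $P_{n+1,n}$.

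Concretely, first I would recall that by \Cref{thm:properties}, $QR_{11}$ has Property $P_{2,2}$, $QR_{43}$ has Property $P_{3,3}$, $QR_{151}$ has Property $P_{4,4}$, and $QR_{659}$ has Property $P_{5,5}$. Then I would apply \Cref{prop:pnk} with the appropriate substitution $n-1\mapsto$ the first index: taking $n=3$ gives that $Tr(QR_{11})$ has $P_{3,2}$ from $P_{2,2}$ of $QR_{11}$; taking $n=4$ gives that $Tr(QR_{43})$ has $P_{4,3}$ from $P_{3,3}$; taking $n=5$ gives that $Tr(QR_{151})$ has $P_{5,4}$ from $P_{4,4}$; and taking $n=6$ gives that $Tr(QR_{659})$ has $P_{6,5}$ from $P_{5,5}$. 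That is the entire argument — it is a mechanical corollary, so there is essentially no obstacle beyond correctly matching indices.

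The only subtlety worth flagging — though it is not really an obstacle — is making sure the notion of ``compatible sequence'' lines up between the two settings: the hypothesis $P_{n-1,k}$ of $QR_p$ is about sequences compatible in the tournament sense (no repeated vertex with conflicting orientation), whereas the conclusion $P_{n,k}$ of $Tr(QR_p)$ is about sequences compatible in the Tromp sense (additionally respecting anti-twins). This is exactly the content of \Cref{prop:pnk} as proved in \cite{ochem_oriented_2014}, so I would simply cite it rather than reprove it. I would also note in passing, as the authors do, that $QR_{151}$ in fact enjoys the stronger $P_{5,6}$, so $Tr(QR_{151})$ actually has $P_{6,6}$, but the weaker $P_{5,4}$ stated in the corollary is all that the later theorems require.

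\begin{proof}
  By \Cref{thm:properties}, $QR_{11}$ has Property $P_{2,2}$, $QR_{43}$ has Property $P_{3,3}$, $QR_{151}$ has Property $P_{4,4}$, and $QR_{659}$ has Property $P_{5,5}$. Applying \Cref{prop:pnk} (with $n=3,4,5,6$ respectively) immediately yields that $Tr(QR_{11})$ has Property $P_{3,2}$, $Tr(QR_{43})$ has Property $P_{4,3}$, $Tr(QR_{151})$ has Property $P_{5,4}$, and $Tr(QR_{659})$ has Property $P_{6,5}$.
\end{proof}
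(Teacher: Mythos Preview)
Your proposal is correct and matches the paper's approach exactly: the corollary is stated immediately after \Cref{prop:pnk,thm:properties} with the phrase ``As a corollary of \Cref{prop:pnk,thm:properties}, we get the following,'' and no further proof is given. Your written proof simply makes explicit the index-matching that the paper leaves implicit.
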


As corollaries of \Cref{rem:cnk,lem:17,thm:deg5-8,coro:tromp}, we
obtain the following four results. We give a proof of the last one (\Cref{coro:proof}),
the other three corollaries follow the same arguments.

\begin{corollary}
  Every graph $G$ with maximum degree $4$ admits a
  $Tr^*(QR_{11})$-coloring. Thus, $\chi_o(G) \le 26$.
\end{corollary}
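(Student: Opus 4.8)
The plan is to apply \Cref{thm:deg5-8} with $\Delta = 4$ and $p = 11$. For this I need to verify that the Tromp graph $Tr(QR_{11})$ satisfies the two hypotheses of that theorem: Property $P_{\Delta-1,\Delta-2} = P_{3,2}$ and Property $C_{\Delta-2,\frac{(2p+2)(\Delta-2)}{\Delta-1}+1}$.

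The first hypothesis is handed to us directly: \Cref{coro:tromp} asserts that $Tr(QR_{11})$ has Property $P_{3,2}$, which (since $QR_{11}$ has Property $P_{2,2}$ by \Cref{thm:properties} and \Cref{prop:pnk} lifts this to the Tromp graph) is exactly what we need. For the second hypothesis I would first evaluate the threshold: with $n = |V(Tr(QR_{11}))| = 2\cdot 11 + 2 = 24$, we get $\frac{(2p+2)(\Delta-2)}{\Delta-1}+1 = \frac{24\cdot 2}{3}+1 = 17$, so the required property is $C_{2,17}$. By \Cref{lem:17}, $Tr(QR_{11})$ has Property $C_{2,\frac{3\cdot 11+1}{2}} = C_{2,17}$, which matches on the nose; in particular no weakening via \Cref{rem:cnk} is needed here.

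Having checked both hypotheses, \Cref{thm:deg5-8} immediately yields that every graph $G$ with maximum degree $4$ admits a $Tr^*(QR_{11})$-coloring. Since $Tr^*(QR_p)$ is built from $Tr(QR_p)$ by adding exactly two vertices $t_0, t_1$, it has $2p+4$ vertices, so $Tr^*(QR_{11})$ has $26$ vertices, and therefore $\chi_o(G) \le 26$.

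There is essentially no genuine obstacle in this corollary — it is a plug-in of $\Delta = 4$, $p = 11$ into the general machinery. The only point demanding care is the arithmetic coincidence: one must check that the value $\frac{n(\Delta-2)}{\Delta-1}+1$ demanded by \Cref{thm:deg5-8} does not exceed the value $\frac{3p+1}{2}$ guaranteed by \Cref{lem:17}. For $(\Delta,p) = (4,11)$ these are both equal to $17$, so \Cref{lem:17} suffices as stated; for the other corollaries (larger $\Delta$ or $p$) one would in general have to appeal to \Cref{rem:cnk} to reconcile a strict inequality, but for this case the two bounds coincide exactly.
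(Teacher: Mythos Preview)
Your proposal is correct and follows exactly the template the paper uses (illustrated for $\Delta=7$ in the proof of \Cref{coro:proof}): verify $P_{\Delta-1,\Delta-2}$ via \Cref{coro:tromp}, verify $C_{\Delta-2,\frac{(2p+2)(\Delta-2)}{\Delta-1}+1}$ via \Cref{lem:17} (together with \Cref{rem:cnk} if needed), and then invoke \Cref{thm:deg5-8}. Your observation that for $(\Delta,p)=(4,11)$ the bound $C_{2,17}$ from \Cref{lem:17} matches the required threshold exactly, so that \Cref{rem:cnk} is superfluous, is a nice extra detail not spelled out in the paper.
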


\begin{corollary}
  Every graph $G$ with maximum degree $5$ admits a
  $Tr^*(QR_{43})$-coloring. Thus, $\chi_o(G) \le 90$.
\end{corollary}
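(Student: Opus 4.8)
The plan is to obtain this corollary as the instance $\Delta = 5$, $p = 43$ of \Cref{thm:deg5-8}. Since $Tr^*(QR_{43})$ has $2\cdot 43 + 4 = 90$ vertices, a $Tr^*(QR_{43})$-coloring of $G$ gives $\chi_o(G)\le 90$ at once, so the entire task reduces to verifying that the Tromp graph $Tr(QR_{43})$ meets the two hypotheses of \Cref{thm:deg5-8}: Property $P_{\Delta-1,\Delta-2}=P_{4,3}$ and Property $C_{\Delta-2,\frac{(2p+2)(\Delta-2)}{\Delta-1}+1}$, which for $\Delta=5$, $p=43$ reads $C_{3,\frac{88\cdot 3}{4}+1}=C_{3,67}$.

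For the first hypothesis I would simply quote \Cref{coro:tromp}, which states that $Tr(QR_{43})$ has Property $P_{4,3}$; recall that this in turn comes from \Cref{prop:pnk} applied to the computer-checked fact (\Cref{thm:properties}) that $QR_{43}$ is the smallest Paley tournament with Property $P_{3,3}$. For the second hypothesis I would use \Cref{lem:17}, which gives that $Tr(QR_p)$ has Property $C_{3,\frac{7p+3}{4}}$; for $p=43$ this is $C_{3,76}$, and since $76\ge 67$, \Cref{rem:cnk} upgrades it to the required Property $C_{3,67}$. With both hypotheses in hand, \Cref{thm:deg5-8} applies and the corollary follows.

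There is no genuine obstacle here: the proof is a direct specialization of \Cref{thm:deg5-8}, and the only substantive point is the arithmetic check that $\frac{7p+3}{4}\ge\frac{(2p+2)(\Delta-2)}{\Delta-1}+1$ holds at $(p,\Delta)=(43,5)$, \ie $76\ge 67$. This is exactly what makes the value $p=43$ --- which is forced on us by the need for Property $P_{4,3}$ via \Cref{thm:properties} --- simultaneously large enough to supply the $C_{3,\cdot}$ property required at maximum degree $5$; all the structural work has already been carried out in \Cref{thm:deg5-8,lem:17,coro:tromp}.
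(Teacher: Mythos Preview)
Your proposal is correct and mirrors exactly the argument the paper gives for the $\Delta=7$ case (\Cref{coro:proof}), specialized to $\Delta=5$ and $p=43$: invoke \Cref{coro:tromp} for $P_{4,3}$, use \Cref{lem:17} to get $C_{3,76}$, weaken via \Cref{rem:cnk} to $C_{3,67}$, and then apply \Cref{thm:deg5-8}. The arithmetic checks are all right, and this is precisely what the paper means by ``the other three corollaries follow the same arguments.''
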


\begin{corollary}
  Every graph $G$ with maximum degree $6$ admits a
  $Tr^*(QR_{151})$-coloring. Thus, $\chi_o(G) \le 306$.
\end{corollary}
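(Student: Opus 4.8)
The plan is to obtain this corollary as a direct instance of \Cref{thm:deg5-8} with $\Delta=6$ and the Paley tournament $QR_{151}$. Setting $p=151$, that theorem produces a $Tr^*(QR_{151})$-coloring of every graph with maximum degree $6$ as soon as the Tromp graph $Tr(QR_{151})$ enjoys both Property $P_{\Delta-1,\Delta-2}=P_{5,4}$ and Property $C_{\Delta-2,\frac{(2p+2)(\Delta-2)}{\Delta-1}+1}=C_{4,\frac{304\cdot 4}{5}+1}$. So the entire argument reduces to checking these two properties for $Tr(QR_{151})$ and then counting its vertices.

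First I would dispatch the $P$-property. By \Cref{thm:properties}, $QR_{151}$ has Property $P_{4,4}$, hence by \Cref{prop:pnk} the graph $Tr(QR_{151})$ has Property $P_{5,4}$, which is exactly $P_{\Delta-1,\Delta-2}$ for $\Delta=6$; this is precisely the content of \Cref{coro:tromp}.

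Next I would handle the $C$-property. Substituting $p=151$ and $\Delta=6$, the threshold coming out of \Cref{thm:deg5-8} is $\frac{304\cdot 4}{5}+1=244.2$, and since the cardinality of a neighborhood is an integer it suffices to establish Property $C_{4,245}$. Now \Cref{lem:17} gives that $Tr(QR_{151})$ has Property $C_{3,\frac{7\cdot 151+3}{4}}=C_{3,265}$, and by the monotonicity recorded in \Cref{rem:cnk} (applied with $n'=4\ge 3$ and $k'=245\le 265$) this already yields Property $C_{4,245}$, which dominates the required threshold. Both hypotheses of \Cref{thm:deg5-8} therefore hold, so every graph $G$ with maximum degree $6$ admits a $Tr^*(QR_{151})$-coloring. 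Finally, $Tr^*(QR_{151})$ has $2p+4=306$ vertices, whence $\chi_o(G)\le 306$.

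I do not expect a real mathematical obstacle here, since all the substantive work lives in \Cref{thm:deg5-8}, in the structural bounds of \Cref{lem:17}, and in the computer-verified \Cref{thm:properties}. The only point requiring care is the bookkeeping: carrying the substitutions $\Delta=6$, $p=151$ correctly through the statement of \Cref{thm:deg5-8}, and confirming that the non-integer threshold $244.2$ the theorem produces is genuinely dominated — after the monotonicity step of \Cref{rem:cnk} — by the integer bound $265$ supplied by \Cref{lem:17}.
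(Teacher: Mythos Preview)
Your proof is correct and follows exactly the template the paper uses (illustrated there for $\Delta=7$): verify $P_{\Delta-1,\Delta-2}$ via \Cref{coro:tromp}, verify the $C$-property via \Cref{lem:17} and the monotonicity of \Cref{rem:cnk}, and then invoke \Cref{thm:deg5-8}. The only extra wrinkle in the $\Delta=6$ case is that the threshold $\frac{304\cdot4}{5}+1$ is not an integer, and you handle that correctly by rounding up to $245\le 265$.
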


\begin{corollary}\label{coro:proof}
  Every graph $G$ with maximum degree $7$ admits a
  $Tr^*(QR_{659})$-coloring. Thus, $\chi_o(G) \le 1322$.
\end{corollary}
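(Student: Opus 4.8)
The plan is to derive \Cref{coro:proof} as an instance of \Cref{thm:deg5-8} with $\Delta = 7$ and $p = 659$. The two hypotheses to verify are that $Tr(QR_{659})$ has Property $P_{\Delta-1,\Delta-2} = P_{6,5}$ and Property $C_{\Delta-2,\,\frac{(2p+2)(\Delta-2)}{\Delta-1}+1}$. For the first, \Cref{coro:tromp} asserts directly that $Tr(QR_{659})$ has Property $P_{6,5}$; this rests on the computer-verified Property $P_{5,5}$ of $QR_{659}$ from \Cref{thm:properties} (which also yields $P_{5,4}$) together with \Cref{prop:pnk}.

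For the second hypothesis I would first compute the required parameter. With $\Delta = 7$ and $2p+2 = 1320$, we have $\frac{(2p+2)(\Delta-2)}{\Delta-1}+1 = \frac{1320 \cdot 5}{6}+1 = 1101$, so we must show that $Tr(QR_{659})$ has Property $C_{5,1101}$. By \Cref{lem:17}, $Tr(QR_{659})$ has Property $C_{3,\frac{7p+3}{4}} = C_{3,1154}$; since $5 \ge 3$ and $1101 \le 1154$, \Cref{rem:cnk} upgrades this to Property $C_{5,1101}$. Both hypotheses of \Cref{thm:deg5-8} being met, every graph with maximum degree $7$ admits a $Tr^*(QR_{659})$-coloring; and since $Tr^*(QR_{659})$ has $2p+4 = 1322$ vertices, this gives $\chi_o(G) \le 1322$.

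There is no real obstacle here beyond arithmetic bookkeeping: the one substantive input is the computer-verified Property $P_{5,5}$ of $QR_{659}$ from \Cref{thm:properties}, which we are entitled to assume. The only point requiring a moment's care is checking that \Cref{lem:17} is actually strong enough, i.e.\ that $\frac{7p+3}{4} \ge \frac{(2p+2)(\Delta-2)}{\Delta-1}+1$ at $p = 659$, $\Delta = 7$; this holds since $1154 \ge 1101$. The same pattern of verification — with $(p,\Delta) = (11,4), (43,5), (151,6)$ and the corresponding entries of \Cref{coro:tromp} — settles the three preceding corollaries in exactly the same way.
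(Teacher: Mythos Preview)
Your proof is correct and follows essentially the same approach as the paper's own proof: verify the two hypotheses of \Cref{thm:deg5-8} for $\Delta=7$ and $p=659$, obtaining $P_{6,5}$ from \Cref{coro:tromp} and $C_{5,1101}$ from \Cref{lem:17} combined with \Cref{rem:cnk}. The additional commentary you include (the provenance of \Cref{coro:tromp}, the explicit inequality check $1154\ge 1101$, and the remark about the other three corollaries) is accurate and helpful, but the core argument is identical to the paper's.
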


\begin{proof}
  Let $\Delta=7$. By \Cref{lem:17}, the graph $Tr(QR_{659})$ has
  Property $C_{3,\frac{7p+3}4} = C_{3,1154}$. By \Cref{rem:cnk}, it thus
  has Property $C_{5,1101} = C_{\Delta-2,\frac{(2p+2)(\Delta-2)}{\Delta-1}+1} 
  $. By \Cref{coro:tromp}, it also has Property
  $P_{6,5} = P_{\Delta-1,\Delta-2}$. The graph $Tr(QR_{659})$ verifies
  the hypothesis of \Cref{thm:deg5-8}, and thus every graph with
  maximum degree $7$ admits a $Tr^*(QR_{659})$-coloring.
\end{proof}

\end{document}